\newtheorem{theorem}{Theorem}[section]
\newtheorem{lemma}[theorem]{Lemma}
\theoremstyle{definition}
\newtheorem{example}{Example}
\theoremstyle{remark}
\newtheorem{remark}[theorem]{Remark}
\newcommand{\K}{\mathbb{K}}
\newcommand{\PP}{\mathbb{P}}
\DeclareMathOperator{\SL}{SL}
\DeclareMathOperator{\GL}{GL}
\DeclareMathOperator{\PGL}{PGL}
\DeclareMathOperator{\Sp}{Sp}
\DeclareMathOperator{\Gr}{Gr}
\DeclareMathOperator{\stab}{stab}
\DeclareMathOperator{\Pf}{Pf}
\DeclareMathOperator{\ud}{d}
\author[G. Gubbiotti]{Giorgio Gubbiotti}
\author[B. van Geemen]{Bert van Geemen}
\author[P. Vergallo]{Pierandrea Vergallo}
\address[G. Gubbiotti and P. Vergallo]{Dipartimento di Matematica ``Federigo Enriques'',
	Universit\`a degli Studi di Milano, Via C. Saldini 50, 20133
	Milano, Italy \& INFN Sezione di Milano, Via G. Celoria 16,
	20133 Milano, Italy}
\email{giorgio.gubbiotti@unimi.it}
\email{pierandrea.vergallo@unimi.it}
\address[B. van Geemen]{Dipartimento di Matematica ``Federigo Enriques'',
	Universit\`a degli Studi di Milano, Via C. Saldini 50, 20133
	Milano, Italy}
\email{lambertus.vangeemen@unimi.it}
\title[Line geometry, 2nd-order Hamiltonian operators, and quasilinear systems]{Line geometry of pairs of second-order Hamiltonian operators and quasilinear systems}
\date{\today}
\numberwithin{equation}{section}
\begin{document}
\maketitle

\begin{abstract}
   { We demonstrate that a pair consisting of a second-order homogeneous
    Hamiltonian structure in $N$ components and its associated system of
    conservation laws is in bijective correspondence with an alternating
    three-form on a $N+2$-dimensional vector space. Additionally, we show that the
    three-form offers $N+2$ linear equations in the Pl\"ucker coordinates
    that define the associated line congruence. We utilize these results to
    characterize { systems of conservation laws with second-order structure} for $N\leq 4$. }{ We finally comment how to extend this result for $N=6$. }
\end{abstract}



\maketitle


\section{Introduction}

{This work focuses on demonstrating a novel correspondence within the
theory of Hamiltonian structures of Partial Differential Equations (PDEs)
by employing techniques from differential geometry, projective
algebraic geometry, and Lie group actions. Specifically, we provide a
geometric description for pairs consisting of a second-order Hamiltonian
operator and its associated quasilinear PDE system. To the best of our
knowledge, this is the first time such a description has been presented
in the literature.}

The Hamiltonian formalism for PDEs is a fundamental
tool when investigating nonlinear phenomema, since the existence of a Hamiltonian structure implies
that the solutions of the given system possess some form of regularity and
they are connected to symmetries and conserved
quantities~\cite{Olver1986,Magri1978}. A modern approach to Hamiltonian
formalism was presented in the second half of the last
century~\cite{Magri1976}: the main concepts introduced by Hamilton for Ordinary
Differential Equations (ODEs) were extended to Partial Differential Equations
(PDEs) by substituting the Poisson tensor on a symplectic manifold with
integro-differential operators on a space of loops, see e.g.~\cite{mokhov}. As
in the finite dimensional case, Hamiltonian structures endow the {space}
with a deep geometric structure, in both differential and algebraic sense.  {A systematic approach to Hamiltonian operators and geometry
was firstly presented by  Dubrovin and  Novikov in~\cite{DN83},
where the authors introduced the concept of differential-geometric Poisson
structure for quasilinear systems}.

{Now, let us briefly recall the main definitions regarding the Hamiltonian formalism of PDEs}.
Assume we are given two independent variables $t$, the ``evolution'' variable,
and $x$, the ``spatial'' {variable, together with $N$ field variables, denoted by $u=(u^1,\dots, u^N)$. Then, an evolutionary system of $l$ equations has the form:} 
\begin{equation}
    u^i_t=f^i(x,u,u_x, \dots u_{kx}), \qquad i=1,2,\dots, l.
    \label{eq:sysl}
\end{equation}
In particular,  in what follows, we focus on evolutionary quasilinear systems of conservation laws:
\begin{equation}\label{11}
    u^i_t=(V^i(u))_x=V^i_{,j}(u)u^j_x,
    \quad
    V^{i}_{,j} := \frac{\partial V^{i}}{\partial u^{j}},
    \quad i=1,\dots, N,
\end{equation}
{where Einstein's summation convention on repeated indices is used.}
Systems of this form are also called \emph{hydrodynamic-type systems}, and are
very well studied, see e.g.\ \cite{sevennec93:_geomet,tsa1,tsa2}. 

{ Now, }the system~\eqref{eq:sysl} is Hamiltonian if it can be written as:
\begin{equation}
    u^i_t = \mathcal{A}^{ij}\frac{\delta H}{\delta u^j},
    \qquad i=1,2,\dots N,
\end{equation}
{where $\delta/\delta u^j$ is the variational derivate with respect to the field
variable $u^j$, $\mathcal{A}=(\mathcal{A}^{ij})$ is a \emph{Hamiltonian
differential operator}, i.e.\ a it is skew-adjoint $\mathcal{A}^*=-\mathcal{A}$, 
such that the Schouten bracket with itself
vanishes~\cite{GelfandDorfman1981}:}
\begin{equation}
    [\mathcal{A},\mathcal{A}]=0,
    \label{eq:schouten}
\end{equation}
and finally $H$ is a functional:
\begin{equation}
    H=\int{h(x,u,u_x,\dots u_{mx})\, \ud x}.
    \label{eq:Hgen}
\end{equation}

Alternatively, the Hamiltonian property of an operator $\mathcal{A}$ is
expressed in terms of the Poisson bracket conditions~\cite{Magri1976}. That is,
given an operator $\mathcal{A}$, it defines a bracket between two
functionals $F=\int{f\, \ud x}$ and $G=\int{g\, \ud x}$ as:
\begin{equation}
    \{F,G\}_\mathcal{A}=\int{\frac{\delta F}{\delta u^i}\mathcal{A}^{ij}\frac{\delta G}{\delta u^j}\, \ud x}.
    \label{eq:poisson}
\end{equation} 
Then, $\mathcal{A}$ is Hamiltonian if the associated bracket $\{.,.
\}_\mathcal{A}$ is skew-symmetric and satisfies the Jacobi identity, i.e.\ the
bracket~\eqref{eq:poisson} is a Poisson bracket, see also~\cite[Sect.
7.1]{Olver1986}.

{In this paper, } we will consider \emph{local differential operators} of the form:
\begin{equation}
    \mathcal{A}^{ij} = a^{ij\sigma}\partial_{\sigma},
    \label{eq:aijsigma}
\end{equation}
where $a^{ij\sigma}=a^{ij\sigma}(u,u_{x},\dots,u_{Kx})$, { and $\sigma\in\mathbb{N}$ such that $\partial_\sigma=\partial^\sigma /\partial x^\sigma$}. On such operators a grading can
be defined  using the rules:
\begin{equation}
    \deg \partial_{\sigma} = \sigma,
    \quad
    \deg u_{kx} = k.
    \label{eq:grading}
\end{equation}
So, the order of the operator is the maximum of the degrees of all the terms $a^{ij\sigma}\partial_{\sigma}$ in
\eqref{eq:aijsigma}. An operator of this
form is called \emph{homogeneous} if all the terms have the same degree. That is,
an $m$th-order homogeneous operator has the following general form:
\begin{equation}
    \begin{aligned}
        \mathcal{P}^{ij} &=g^{ij}{\partial^m_x}+b^{ij}_k\,u^k_x\,{\partial^{m-1}_x}
        +\left(c^{ij}_k\,u^k_{xx}+c^{ij}_{kl}\,u^k_x\,u^l_x\right){\partial^{m-2}_x}
        +\dots 
        \\
        &+\left(d^{ij}_k\,u^k_{nx}+\dots +d^{ij}_{k_1\dots k_m}\,u^{k_1}_x\,\cdots\, u^{k_m}_x\right) \,.
    \end{aligned}
    \label{hhom}
\end{equation}
where ${g^{ij}}$, $b^{ij}_k$, $c^{ij}_k$, $c^{ij}_{kl}$,\ldots depend on the field variables
$u^1,\dots, u^N$. Homogeneous differential operators were systematically studied
for the first time
in~\cite{DN83}, where the first-order case was considered, while the general
expression for the $m$th-order operator was presented in~\cite{DubrovinNovikov:PBHT}.
The interested reader can see the review paper~\cite{mokhov} for more details
on the topic. 

Second- and third-order operators were investigated independently in details by
Doyle in~\cite{doyle} and Pot\"emin in~\cite{potemin}. In their papers, they found explicitly the
necessary and sufficient conditions for such operators to be Hamiltonian and
proved that there exists a change of  variables  such that
$\mathcal{P}$ can be re-written into the following form
\begin{equation}\label{2}   
\mathcal{P}^{ij}=\partial_x\circ \mathcal{Q}^{ij} \circ \partial_x,
\end{equation} 
where $\mathcal{Q}^{ij}$ is a homogeneous operator of order $0$ and $1$ respectively.
The above canonical form is consequently known as \emph{Doyle-Pot\"{e}min form}
of the operator. Recent developments in this direction show how this canonical
form is typical of a large number of Hamiltonian operators~\cite{LorShaVit1} where
the homogeneous operator $\mathcal{Q}$ is of arbitrary order $d\geq0$.

In recent years, homogeneous Hamiltonian operators have been studied by { applying geometric approaches}, coming both from differential and algebraic geometry,
see~\cite{VerVit2,FPV14,FPV16,FPV17:_system_cl,LorShaVit1}. For instance, in
the non-degenerate case ($\det g\neq0$) it was shown that the leading
coefficient $g^{ij}$ is invariant under projective transformations of the field
variables and the whole operator is invariant under projective-reciprocal
transformations of the independent variables $t$ and $x$, by Ferapontov, Pavlov
and Vitolo for the third-order case~\cite{FPV14}, and by Vitolo and one of the
present authors for the second-order case~\cite{VerVit2}.  This
projective-reciprocal invariance has revealed a \emph{deeper geometric
interpretation} of the Hamiltonian operators and corresponding systems of first order
PDEs in terms of Monge metrics and
alternating three-forms for the third- and second-order cases respectively.

\subsection{Content and structure of the paper}

In this paper we focus on quasilinear systems of first-order PDEs, also known
as \emph{hydrodynamic-type systems}~\cite{sevennec93:_geomet}, admitting a
Hamiltonian structure with a second-order homogeneous operator. We will present
a projective geometric interpretation of the systems and classify them in terms
of projective-reciprocal transformations. {Moreover, we show that the pair
\emph{operator-system} is associated to an alternating three-form  that
uniquely defines it}.  We finally present a direct connection between the
components of the underlying three-form and the coeffiecients of the linear
system satisfyied by the Pl\"ucker coordinates of an associated line
congruence, thus extending the results of~\cite{VerVit2}.

The paper is structured as follows.  In \Cref{sec1}, we review the general
geometry of systems of conservation laws admitting Hamiltonian structure with a
second-order operator. \Cref{sec3} is the core of this paper,  where we present
a bijective correspondence between the pairs formed by a second-order structure
together with the associated system of conservation laws and alternating
three-forms in the $N+2$ projective space. Moreover, here we prove that such systems possess projective-reciprocal
invariance.
{In \Cref{sec6}, we show that in our setting the congruences of lines introduced by Agafonov and Ferapontov~\cite{agafer1,agafer2} for systems of conservation
laws 
is nothing but the annihilator set of lines of the three-form we built in the previous
Section.
We use these results to give a classification of systems with second-order structure in \Cref{sec5} up to $N=4$ and a brief discussion for the case $N=6$}. Finally,
in \Cref{concl}, we present some conclusions and an outlook on this topic.

\section{Homogeneous second-order Hamiltonian structures}\label{sec1}

Before entering the core of the paper, we recall some facts on the geometry of systems of conservation laws
admitting Hamiltonian structure with a second-order operator,
see ~\cite{VerVit2} for further details. The most general operator of this type has the form:
\begin{equation}
    \label{1}
    \mathcal{P}^{ij}=g^{ij}\partial_x^2+b^{ij}_ku^k_x\partial_x+c^{ij}_ku^{k}_{xx}
  +c^{ij}_{kh}u^k_xu^h_x,
\end{equation}
{ where, as mentioned before, all the coefficients depend on the field variables only. In addition, one can show that $g^{ij}$ transforms as a $(2,0)$ tensor whereas $b^{ij}_k$, $c^{ij}_k$ and $c^{ij}_{kl}$ do not transform in a tensorial way. The interested reader can see \cite{Ferguson2008} for explicit formulas (page 470, formulas (6)). }

{  Let us now observe that the skew-adjointness of the operator directly implies $g^{ij}=-g^{ji}$ so that if $g^{ij}$ is
non-degenerate (i.e.\ $\det g \neq 0$) then necessarily
$N=2h$.  Moreover, in the non-degenerate case, the symbols $-g_{li}b^{ij}_k$ and $-g_{li}c^{ij}_k$ follow the transformation rule of a covariant connection.}

{Finally, the  Doyle-Pot\"emin  canonical
form~\eqref{2} of~\eqref{1} is such that $\mathcal{Q}^{ij}=g^{ij}$ is an operator of degree 0 and} defining $g_{ij}={(g^{ij})}^{-1}$ we can show
that~\cite{doyle,potemin}:
\begin{equation}\label{3}
g_{ij}=T_{ijk}u^k+g_{ij}^0,
\end{equation} 
where $T$, $g^0$ are totally skew-symmetric tensors whose components are
constants. So, we can present the two-forms:
\begin{equation}
    g=g_{ij}\ud u^i\wedge \ud u^j,
    \quad 
    g^{0}=g_{ij}^0\ud u^i\wedge \ud u^j,
    \quad i<j,
    \label{eq:gform}
\end{equation}
and the alternating three-form:
\begin{equation}
    T=T_{ijk}\ud u^i\wedge \ud u^j\wedge \ud u^k,
    \quad
    i<j<k,
    \label{eq:Tform}
\end{equation}
in a real or complex vector space of dimension $N$.  Therefore, the number of
independent components of the two-forms $g$ and $g^{0}$ is $N(N-1)/2$, and the
number of independent components of the alternating three-form $T$ is
$N(N-1)(N-2)/6$. In summary, every alternating two-form of type~\eqref{3}
defines a unique homogeneous Hamiltonian operator of second order.  

{As previously mentioned}, projective algebraic geometry plays a key
role in investigating homogeneous Hamiltonian operators of both order 2 and 3.
{For this reason,} let us consider the $N$-dimensional projective space
$\mathbb{P}^N=\mathbb{P}(\mathbb{K}^{N+1})$,
$\mathbb{K}=\mathbb{R},\mathbb{C}$, and let $u^1, \ldots ,u^N,u^{N+1}$ be the
coordinates on $\K^{N+1}$. Following~\cite{VerVit2}, we can define a
homogeneous version $G$ of $g$ in these coordinates, such that
equation~\eqref{3} becomes
\begin{equation}
    G_{ij}=T_{ijk}u^k+g^0_{ij}u^{N+1}.
    \label{eq:Gdef}
\end{equation}
{We then define an alternating three-form $\tilde{T}$ on a $\K$-vector space of dimension $N+1$  as follows:
\begin{equation}
    \label{5}
    \tilde{T}_{ijk}=
    \begin{cases}T_{ijk} \qquad i,j,k\neq N+1,\\
    +g^0_{ij}\qquad k=  N+1, \\
    -g^0_{ik}\qquad j=N+1,\\
    +g^0_{jk}\qquad i=N+1.
    \end{cases}.
\end{equation}
}

This construction implies the following result:
 
\begin{theorem}[\cite{VerVit2}] \label{th:corresp}
  There is a bijective correspondence between the leading coefficients of
  second order homogeneous Hamiltonian operators in Doyle-Pot\"{e}min form and
  the three-forms $\tilde{T}$. Moreover, the bijective correspondence is
  preserved by projective reciprocal transformations up to a conformal factor.
\end{theorem}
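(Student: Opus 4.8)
The plan is to treat the two assertions in turn. For the bijection, the key observation is that the homogenised coefficient $G$ of~\eqref{eq:Gdef} is exactly the contraction of $\tilde{T}$ with the position (Euler) vector $u = (u^1,\dots,u^{N+1})$:
\begin{equation}
  G_{ij} = \tilde{T}_{ijk}\,u^k, \qquad 1 \le i,j \le N,\ k = 1,\dots,N+1,
\end{equation}
as one verifies by substituting the four cases of~\eqref{5}. Thus the whole passage from the three-form to the operator is $\tilde{T} \mapsto G = \iota_u \tilde{T} \mapsto g^{ij} = (g_{ij})^{-1}$, where $g_{ij} = G_{ij}|_{u^{N+1}=1}$. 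First I would check that~\eqref{5} really defines a totally antisymmetric tensor; this reduces to the skew-symmetry of $g^0$ together with a one-line sign check across the three cases in which a single index equals $N+1$.

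Next I would exhibit the linear isomorphism behind the bijection. With respect to the splitting $\K^{N+1} = \K^N \oplus \K e_{N+1}$ there is the canonical decomposition
\begin{equation}
  \Lambda^3(\K^{N+1})^* \;\cong\; \Lambda^3(\K^N)^* \oplus \Lambda^2(\K^N)^*,
\end{equation}
and~\eqref{5} is precisely this map: the first summand records $T$ (all indices $\neq N+1$) and the second records $g^0$ (exactly one index equal to $N+1$). The count $\binom{N+1}{3} = \binom{N}{3} + \binom{N}{2}$ already noted in the text shows that no data is lost, and the inverse simply reads $T$ and $g^0$ off the components of $\tilde{T}$, forms $g_{ij} = T_{ijk}u^k + g^0_{ij}$, and inverts. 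To upgrade this linear bijection to one of \emph{Hamiltonian} operators I would invoke the Doyle--Pot\"emin characterisation recalled before~\eqref{3}~\cite{doyle,potemin}: every skew form of type~\eqref{3} with $\det g_{ij}\neq 0$ is the leading coefficient of a Hamiltonian operator in Doyle--Pot\"emin form, and conversely. Non-degeneracy is a condition on the same data on both sides, so the two loci coincide and the bijection descends.

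For the invariance statement the guiding idea is that $\tilde{T}$ is a genuine alternating three-form on $\K^{N+1}$ and therefore transforms tensorially under $\GL(N+1)$ by construction. The plan is to identify the projective-reciprocal transformations with the linear action of $\GL(N+1)$ on the homogeneous coordinates $(u^1,\dots,u^{N+1})$: the projective transformations of the field variables are the affine ones fixing the hyperplane $u^{N+1}=0$, while the reciprocal transformations of $x$ are exactly those moving this hyperplane at infinity, the two families together generating $\PGL(N+1)$. Under $A\in\GL(N+1)$ the form pulls back as $(\tilde{T}')_{ijk} = A^p_{\,i}A^q_{\,j}A^r_{\,k}\,\tilde{T}_{pqr}$; by the contraction identity of the first paragraph $G$, and hence $g_{ij}$, then transforms covariantly as a two-form and the leading coefficient $g^{ij}$ contravariantly. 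Because the construction is written entirely through the intrinsic contraction $G = \iota_u \tilde{T}$ with the (equivariant) Euler vector, the correspondence $\tilde{T}\leftrightarrow g^{ij}$ automatically intertwines the two actions, recovering and making transparent the transformation law of~\cite{VerVit2}.

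The step I expect to be the main obstacle is pinning down the \emph{conformal factor}. While $\tilde{T}$ transforms cleanly and linearly, the object attached to the operator is the \emph{inverse} metric $g^{ij}$ read in the affine chart $u^{N+1}=1$, so comparing the two sides requires tracking two distinct scalar sources: (i) the rescaling of $\mathcal{P} = \partial_x\circ g^{ij}\circ\partial_x$ produced by the reciprocal change of the spatial variable $dx$, and (ii) the $\K^*$-ambiguity in lifting $\PGL(N+1)$ to $\GL(N+1)$, under which a scalar matrix $\la I$ multiplies $\tilde{T}$ by $\la^3$ and hence $g^{ij}$ by $\la^{-3}$. The content of the second assertion is that, once these are combined, the discrepancy between the two sides is a single scalar -- independent of the indices -- rather than a genuine density; verifying this scalar character is the technical heart of the proof, and it is precisely this scalar that appears as the conformal factor in the statement.
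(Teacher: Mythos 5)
Your bijection half is correct and is exactly the paper's route: the paper gives no proof of this theorem beyond the construction \eqref{5} (the result is recalled from \cite{VerVit2}), and your splitting $\Lambda^3\K^{N+1}\cong\Lambda^3\K^{N}\oplus\Lambda^2\K^{N}$, together with the contraction identity $G_{ij}=\tilde{T}_{ijk}u^k$, is precisely the content of \eqref{eq:Gdef} and of the decomposition \eqref{eq:decompNppNvervit}, with the Doyle--Pot\"emin characterisation supplying the equivalence between data of type \eqref{3} and Hamiltonian operators in the form \eqref{2}.

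The gap is in your treatment of the conformal factor, which you yourself single out as the technical heart and then defer. You propose to obtain it from (i) the rescaling of $\mathcal{P}=\partial_x\circ g^{ij}\circ\partial_x$ under the reciprocal change of $\ud x$ and (ii) the $\K^{*}$-ambiguity of lifting $\PGL(N+1,\K)$ to $\GL(N+1,\K)$. Neither source can produce it: (ii) contributes only a \emph{constant} $\la^{\pm 3}$, and (i) involves the constants $\alpha^0_i,\alpha^0_0$ of \eqref{rec}, which act on the independent variables and are not even determined by the projective transformation \eqref{trans}; but the factor in the statement is the \emph{point-dependent} function $A(u)^{-3}$, with $A(u)=a^{N+1}_ku^k+a^{N+1}_{N+1}$ as in \eqref{eq:Au}, built from the bottom row of the matrix $a$ itself (see the transformation law recalled right after \eqref{eq:Au}). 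Its actual origin is the one scalar source you mention in passing but omit from your list: the restriction to the affine chart $u^{N+1}=1$. The two-form $\iota_u\tilde{T}$ has components linear in $u$ and two covariant indices, hence is homogeneous of degree $3$; a linear map $a$ sends the point $(u,1)$ to $A(u)\cdot(\tilde{u},1)$, so comparing the tensorial pullback with the form read off in the chart $\tilde{u}^{N+1}=1$ forces exactly the factor $A(u)^{-3}$. Your Euler-contraction framework does produce this factor once the chart computation is carried out, so the mechanism is right, but the plan as written sends you to look for the factor in two places where it cannot be found. A smaller imprecision of the same kind: in the paper, the projective transformations of the field variables \eqref{trans} already constitute the full $\PGL(N+1,\K)$, while the reciprocal transformations \eqref{rec} act on $(x,t)$; they are not the non-affine part of the field-variable action, as your proposed dictionary suggests.
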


Finally, introducing potential coordinates $b^i_x=u^i$, the operator~\eqref{1}
takes the following simple form:
\begin{equation}
    \label{finop}
    \mathcal{P}^{ij}=-g^{ij}.
\end{equation}
Then, for a system of the form~\eqref{11}, the compatibility conditions to be
Hamiltonian with a second order Hamiltonian operator~\eqref{finop} are
expressed by the following theorem:
\begin{theorem}[\cite{VerVit1, VerVit2}] 
    The necessary
conditions for a second-order homogeneous Hamiltonian operator $\mathcal{P}$~\eqref{finop} to be a Hamiltonian operator
for a quasilinear system of first-order conservation laws~\eqref{11} are
\begin{subequations}\label{eq:37}
  \begin{gather}
    \label{eq:451}
    g_{qj}V^j_{,p} + g_{pj}V^j_{,q} = 0,
    \\
    \label{eq:38}
    g_{qk}V^k_{,pl} + g_{pq,k}V^k_{,l} + g_{qk,l}V^{k}_{,p}= 0.
  \end{gather}
\end{subequations}
\end{theorem}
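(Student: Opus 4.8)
The plan is to work in the potential coordinates $b^i_x=u^i$ that were used to bring $\mathcal{P}$ into the ultralocal form \eqref{finop}, and then to extract the constraints from the single requirement that the covector produced by the flow genuinely be a variational derivative. In these coordinates $\mathcal{P}^{ij}=-g^{ij}$ is a multiplication (zeroth-order) operator, and the conservation law \eqref{11}, being $u^i_t=(V^i)_x$, integrates once in $x$ to $b^i_t=V^i(b_x)$ (the integration ``constant'' is a function of $t$ alone and may be normalized to zero).

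First I would use the non-degeneracy $\det g\neq 0$ to invert the Hamiltonian relation $b^i_t=\mathcal{P}^{ij}\,\delta H/\delta b^j=-g^{ij}\,\delta H/\delta b^j$. Comparing with $b^i_t=V^i(b_x)$ and contracting with the inverse matrix $g_{li}=(g^{ij})^{-1}$ forces
\begin{equation*}
  \frac{\delta H}{\delta b^j}=W_j,\qquad W_j:=-g_{ji}(b_x)\,V^i(b_x),
\end{equation*}
so that the variational derivative is a prescribed function of the first derivatives $b_x=u$ only. Being Hamiltonian is thus equivalent to requiring that this specific $W=(W_j)$ be a variational gradient, i.e.\ that its Fr\'echet derivative $\ell_W$ be formally self-adjoint (the Helmholtz conditions). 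Since $W_j$ depends on $b$ only through $b_x$, one has $(\ell_W)_{jk}=W_{j,k}\,\partial_x$, and imposing $\ell_W=\ell_W^*$ splits into exactly two matchings: the coefficient of $\partial_x$ yields the symmetry $W_{j,k}+W_{k,j}=0$, and the coefficient of $\partial_x^0$ yields $W_{k,jm}=0$ (so $W$ is affine in $u$).

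It then remains to substitute $W_j=-g_{ji}V^i$ with $g_{ji}=T_{jik}u^k+g^0_{ji}$ from \eqref{3} and carry out the index bookkeeping. Differentiating gives $W_{j,k}=-\bigl(T_{jik}V^i+g_{ji}V^i_{,k}\bigr)$; in $W_{j,k}+W_{k,j}$ the purely algebraic part is $(T_{jik}+T_{kij})V^i$, which vanishes identically by the total antisymmetry of $T$ (the two summands differ by a cyclic shift composed with a transposition), leaving exactly \eqref{eq:451}. Differentiating once more gives $W_{k,jm}=-\bigl(T_{kij}V^i_{,m}+T_{kim}V^i_{,j}+g_{ki}V^i_{,jm}\bigr)$, and the condition $W_{k,jm}=0$ becomes, after relabeling and using $T_{qkp}=T_{pqk}$ (again a cyclic shift), precisely \eqref{eq:38}.

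The main obstacle is conceptual rather than computational: one must recognize that the entire content of ``$\mathcal{P}$ is Hamiltonian for \eqref{11}'' is captured by the variational conditions on the \emph{single} forced covector $W_j=-g_{ji}V^i$, and then separate the two orders in $\partial_x$ cleanly. A secondary subtlety is that a priori the functional $H$ need not be local; for the necessity direction this is harmless, since any variational derivative has self-adjoint linearization, and in fact the resulting identity $W_{k,jm}=0$ shows $W$ is affine in $u$, so a local Hamiltonian density exists a posteriori. What is left is only the sign-sensitive verification that the antisymmetry of $T$ collapses the algebraic terms into the stated tensorial identities.
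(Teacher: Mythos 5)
Your proof is correct: the Helmholtz splitting and the index bookkeeping both check out, and the two conditions come out exactly as \eqref{eq:451} and \eqref{eq:38} (the identities $T_{jik}+T_{kij}=0$ and $T_{qkp}=T_{pqk}$ are used correctly; the overall sign of your $W_j=-g_{ji}V^i$ versus \eqref{sysm} is immaterial since the conditions are homogeneous in $W$). Bear in mind, though, that this paper contains no proof of the statement at all: it is imported from \cite{VerVit1,VerVit2}, so the comparison must be with the derivation in those references, which (as the title of \cite{VerVit1} indicates) proceeds via the theory of coverings — one requires the operator to define a morphism from the cotangent covering of the system $u^i_t=(V^i)_x$, i.e.\ to map cosymmetries to symmetries, and reads off \eqref{eq:37} from the vanishing of the coefficients in the resulting operator identity, without ever invoking a Hamiltonian functional. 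Your route — potential coordinates, where $\mathcal{P}^{ij}=-g^{ij}$ is ultralocal and the system integrates to $b^i_t=V^i(b_x)$, followed by the Helmholtz conditions on the forced covector $W_j$ — is more elementary and self-contained, and has a pleasant by-product: your two conditions say precisely that $W_{j,k}$ is constant and skew-symmetric (in fact the first condition alone implies the second, since differentiating it yields a tensor antisymmetric in $j,k$ and symmetric in $k,m$, which must vanish), so you recover for free the affine form \eqref{sysm} of the fluxes that the paper quotes separately from \cite[Theorem 11]{VerVit2}. The price is the locality issue you flag yourself: the Helmholtz criterion presupposes a local Hamiltonian density, which is legitimate here for the necessity claim because the paper's definition of Hamiltonianity, equation \eqref{eq:Hgen}, already assumes $H$ local, whereas the covering formulation of \cite{VerVit1} sidesteps the functional $H$ altogether.
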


Note that conditions \eqref{eq:37} are algebraic in $g_{ij}$ and they can be
explicitly solved for unknown $V^i$. Indeed, the fluxes $V^i$ satisfying
\eqref{eq:37} have the form
\begin{equation}
    \label{sysm}
    V^i = g^{ij}W_j,
    \qquad \text{where}\quad 
    W_j=A_{jl}u^l+B_j.
\end{equation}
Here $A_{ij}=-A_{ji}$, $B_i$ are arbitrary constants. The interested reader can
see~\cite[Theorem 11]{VerVit2}. 

Solving system~\eqref{eq:37} reveals also an
inner mutual relation between the operator and the Hamiltonian system. In
particular, we indicate with $(\mathcal{P},V)$ the pair
\emph{operator-system} { and we} denote the space of the pairs operator-system
in $N$ components by $\mathcal{Y}_{N}$.

\begin{remark}
    We remark that the fluxes $V^i$ are rational functions whose numerator is a
    polynomial of degree $N/2=h$ in $u$, and the denominator is $\Pf(g)$, the
    Pfaffian of $g$, see~\cite{Muir:TtD60}.  Indeed,  the inverse matrix of
    $g_{ij}$ has rational functions entries where the numerator has degree
    $(N-2)/2$ in $u$ and the denominator is $\Pf(g)$, whose degree in $u$ is at
    most $N/2$. 
    \label{rem:degrees}
\end{remark}

    { \begin{remark}
        Following \cite[Proposition 18]{VerVit2}, we can also comment that if a Hamiltonian structure of the system exists, the corresponding Hamiltonian functional takes the form 
        \begin{equation}
         H=-\int{\left[\frac{1}{2}A_{sl}b^l_xb^s+B_sb^s \right] \ud x } ,
        \end{equation}
        where $b^i_x=u^i$, so that a non-locality in the field variables appears. Explicitly, the system reads as
        \begin{equation}
            b^i_t=g^{ij}\, \frac{\delta H}{\delta b^j}=g^{ij}\left(A_{js}b^s_x+B_j\right).
        \end{equation}
    \end{remark}}

\section{Alternating three-forms and projective-reciprocal {invariance of the system}} \label{sec3}

Based on the results we recalled in the previous section, here we
prove that the pair \emph{operator-system} $(\mathcal{P},V)$ in $N$ components
is in bijective correspondence with alternating three-forms in $N+2$ dimensions, { finally showing
 the projective-reciprocal invariance of the systems.}

\subsection{Equivalence of the pairs $(\mathcal{P},V)$ with alternating three-forms}

The results of~\cite{VerVit2} can be heuristically explained by saying that a
second-order homogeneous Hamiltonian operator $\mathcal{P}$, see~\eqref{3}, in
$N$ components is in bijective correspondence with an alternating three-form
$\tilde{T}$ on $\mathbb{K}^{N+1}$.

Now, we prove that the associated system of conservation laws~\eqref{11} can be
incorporated together with its operator in an alternating three-form on
$\mathbb{K}^{N+2}$. This is the content of the following theorem:

\begin{theorem}\label{thm11}
    There exists a correspondence between the pair $(\mathcal{P},V)$ of the
    second-order operator and { its} associated systems in $N$ components and
    three-forms in $N+2$ components.  Explicitly, there exists a bijective map
    $\Phi\colon \Lambda^3\mathbb{K}^{N+2}\rightarrow \mathcal{Y}_{N}$ defined
    as:
    \begin{align}
       \Lambda^3\mathbb{K}^{N+2}\ni ( \omega_{ijk}) 
       \mapsto \left(\omega_{ijk}u^k+\omega_{ij\,N+1}, g^{is}(\omega_{ij\, N+2}u^j+\omega_{i\, N+1\, N+2})\, \right)
       \in \mathcal{Y}_{N},
    \end{align}
    with inverse 
    $\Phi^{-1}\colon\mathcal{Y}_{N} \rightarrow \Lambda^3\mathbb{K}^{N+2}$ defined
    as:
    \begin{align}
        \mathcal{Y}_{N}\ni(\mathcal{P},V)
        \mapsto
        \Omega=
        \tilde{T}_{ijk} + A\wedge \ud u^{N+2}
        +B \wedge \ud u^{N+1}\wedge \ud u^{N+2}\in\Lambda^{3}\K^{N+2},
    \end{align}
    where $\tilde{T}\in\Lambda^{3}\K^{N+1}$ is defined in equation~\eqref{5},
    and $A\in\Lambda^{2}\K^{N}$, $B\in\Lambda^{1}\K^{N}\simeq\K^{N}$ are the
    constants appearing in equation~\eqref{sysm}.
\end{theorem}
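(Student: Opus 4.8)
The plan is to show $\Phi$ and $\Phi^{-1}$ are mutually inverse by unwinding the explicit formulas and checking that composing them in either order yields the identity. The key structural observation is that a three-form $\Omega\in\Lambda^3\K^{N+2}$ decomposes according to how many of the two distinguished indices $N+1, N+2$ appear in each term. Writing $\Omega=\Omega_{ijk}\,\ud u^i\wedge\ud u^j\wedge\ud u^k$ with $i<j<k$ ranging over $1,\dots,N+2$, I would group the components into four blocks: (i) those with $i,j,k\le N$, giving a three-form $T\in\Lambda^3\K^N$; (ii) those with exactly $k=N+1$ (and $i,j\le N$), giving a two-form; (iii) those with exactly $k=N+2$, giving a second two-form; and (iv) those with $\{N+1,N+2\}\subset\{i,j,k\}$, giving a one-form. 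This is precisely the decomposition $\Omega=\tilde T + A\wedge\ud u^{N+2} + B\wedge\ud u^{N+1}\wedge\ud u^{N+2}$ asserted in the statement, where $\tilde T\in\Lambda^3\K^{N+1}$ absorbs blocks (i) and (ii) via the recipe~\eqref{5}, the two-form $A$ collects the $N+2$ components of block (iii), and $B$ is the vector from block (iv).

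First I would verify that the dimensions match as a consistency check: $\dim\Lambda^3\K^{N+2}=\binom{N+2}{3}$, which splits as $\binom{N}{3}$ (block i, matching the components of $T$) plus $\binom{N}{2}$ (block ii, matching $g^0$, so that blocks i and ii together reconstruct $\tilde T$ by Theorem~\ref{th:corresp}) plus $\binom{N}{2}$ (block iii, the skew matrix $A$) plus $N$ (block iv, the vector $B$). Since $\binom{N}{3}+\binom{N}{2}=\binom{N+1}{3}=\dim\Lambda^3\K^{N+1}$ accounts for $\tilde T$, and the parameters $(A,B)$ exactly match the data $(A_{ij},B_i)$ in~\eqref{sysm} that parametrize the flux $V$, the target $\mathcal{Y}_N$ has the right size. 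Then I would read off $\Phi$ directly: the blocks (i)-(ii) of $\omega$ reassemble into $G_{ij}=\omega_{ijk}u^k+\omega_{ij\,N+1}$, which by~\eqref{eq:Gdef} is the homogeneous form of the operator's metric, hence determines $\mathcal{P}$; and blocks (iii)-(iv) give $W_j=\omega_{ij\,N+2}u^j+\omega_{i\,N+1\,N+2}$ (so $A_{ij}=\omega_{ij\,N+2}$ and $B_i=\omega_{i\,N+1\,N+2}$), whence $V^i=g^{is}W_s$ by~\eqref{sysm}.

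With the decomposition in hand, the two composition checks reduce to bookkeeping. Going $\Phi^{-1}\circ\Phi$: starting from $\omega$, the pair $(\mathcal{P},V)$ built by $\Phi$ records the metric coefficients (blocks i-ii), the skew matrix $A_{ij}=\omega_{ij\,N+2}$, and the vector $B_i=\omega_{i\,N+1\,N+2}$; reassembling these via the formula for $\Phi^{-1}$ returns exactly the four blocks of $\omega$, using Theorem~\ref{th:corresp} to recover $\tilde T$ from the operator. Going $\Phi\circ\Phi^{-1}$ is symmetric: the three-form $\Omega$ assembled from $(\mathcal{P},V)$ has blocks that $\Phi$ decodes back into the same operator (again via~Theorem~\ref{th:corresp}) and the same flux (via~\eqref{sysm}). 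The main subtlety to watch is the sign pattern: the placement of the distinguished index in the wedge products $A\wedge\ud u^{N+2}$ and $B\wedge\ud u^{N+1}\wedge\ud u^{N+2}$ must be reconciled with the ordered-index convention $i<j<k$ used for $\Omega_{ijk}$, mirroring the signs already fixed in~\eqref{5}. Confirming these signs so that the identifications $A_{ij}=\omega_{ij\,N+2}$ and $B_i=\omega_{i\,N+1\,N+2}$ hold with no stray factors is the one place where care is needed; everything else is a transcription of the component formulas.
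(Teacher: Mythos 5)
Your proposal is correct and takes essentially the same approach as the paper: both decompose the three-form into the blocks $(\tilde T, A, B)$ according to the occurrence of the distinguished indices $N+1$ and $N+2$, invoke Theorem~\ref{th:corresp} to handle the operator part, and invoke the parametrization~\eqref{sysm} of compatible fluxes to handle the system part. Your added dimension count and explicit mutual-inverse bookkeeping correspond to the decomposition~\eqref{eq:decompNppN} and Table~\ref{tab:tabid} that the paper records immediately after the theorem.
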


\begin{proof}
        Let us consider an alternating three-form $\Omega
        \in\Lambda^{3}\mathbb{K}^{N+2}$ with components $\omega_{ijk}$.
        Following~\cite[Theorem 2]{VerVit2} let us define 
        \begin{equation}
            g_{ij}=\omega_{ijk}u^k+\omega_{ijN+1}, \qquad i,j=1,\dots N.
        \end{equation}
        Note that this equation uniquely defines a second-order homogeneous
        Hamiltonian operator $\mathcal{P}_\omega$ in flat
        coordinates~\eqref{3}. Moreover, let us define $g^{ij}=(g_{ij})^{-1}$
        and then
        \begin{equation}
            V^i_\omega=g^{is}\left(\omega_{sjN+2}u^j+\omega_{sN+1N+2}\right), 
            \qquad i=1,\dots N.
        \end{equation} 
        By~\eqref{sysm},  this covector satisfies conditions~\eqref{eq:37}.
        Therefore,  the couple $(\mathcal{P}_\omega,V_\omega)$ is a compatible
        pair operator-system.  

        Viceversa, let us consider a pair $(\mathcal{P},V)\in\mathcal{Y}_{N}$. The
        bijection between operators and three-forms $\tilde{T}\in \Lambda^2
        \K^{N+1}$ has been already proved (see \Cref{th:corresp}). Moreover, by
        solving conditions~\eqref{eq:37} we obtain that there exist an
        alternating two-form $A\in\Lambda^{2} \K^{N}$ and a $1$-form
        $B\in\Lambda^1\K^N$ such that $V$ is as in~\eqref{sysm}. Let us now
        define by direct construction of the three-form
        $\omega_{(\mathcal{P},V)}$:
        \begin{subequations} 
            \begin{align}
               \omega_{ijk} &=\tilde{T}_{ijk}, \qquad i,j,k=1,\dots, N+1,
               \\
               \omega_{ijN+2} &=A_{ij},\qquad i,j=1,\dots N,\; k=N+2,
               \\
               \omega_{iN+1N+2} &=B_i,\qquad i=1,\dots N, \; j=N+1,k=N+2,
        \end{align}
        \label{eq:omcoeff}
        \end{subequations}
       which clearly defines a unique alternating three-form
       $\omega_{(\mathcal{P},V)}\in\Lambda^3\K^{N+2}$.
\end{proof}

The previous theorem can be interpreted as a decomposition of the
exterior algebra $\Lambda^3\mathbb{K}^{N+2}$ as follows:
\begin{equation}
    \Lambda^3\K^{N+2}
    =
    \Lambda^3\K^{N+1}\oplus\Lambda^2\K^{N}\oplus\Lambda^1\K^{N},
    \label{eq:decompNppN}
\end{equation}
where $\omega$ corresponds to $(\tilde{T},A,B)$.
Analogously~\cite[see discussion in Section 2.2]{VerVit2} makes use of the following result:
\begin{equation}
    \Lambda^3\K^{N+1}
    =
    \Lambda^3\K^{N}\oplus\Lambda^2\K^{N}.
    \label{eq:decompNppNvervit}
\end{equation}
For an identification with the various $k$-forms we refer to Table
\ref{tab:tabid}.
The decompositions in equations \eqref{eq:decompNppN} and \eqref{eq:decompNppNvervit}
lead to a dimension count: the last entries of Table \ref{tab:tabid} sum to
\begin{equation}
\begin{pmatrix}
    N+2
    \\
    3
\end{pmatrix}=\dim \Lambda^3\K^{N+2},
\quad
\begin{pmatrix}
    N+1
    \\
    3
\end{pmatrix}=\dim \Lambda^3\K^{N+1}.
\end{equation}
{\begin{remark}
   From an analogue point of view, the previous result can be interpreted as a correspondence between three-forms in $N+2$ dimensions and systems admitting a second-order Hamiltonian structure. Furthermore, such a correspondence carries the bijection between three-forms in $N$ dimensions and the corresponding Hamiltonian operators of second order.

    In Section \ref{sec5}, we will show that this result implies that
    classifying systems of conservation laws with second-order structure is
    equivalent to classify three-forms under the action of the special projective
    group.
\end{remark}}
\begin{remark}
    We remark that the last two summands in \eqref{eq:decompNppN} can be joined   to define a single alternating two-form in $\K^{N+1}$ as follows
    \begin{equation}
        \tilde{A}=A_{ij}\, \ud u^{i}\wedge \ud u^j+B_s \, \ud u^s\wedge \ud u^{N+1},
        \label{eq:Atdef}
    \end{equation}
    where:
    \begin{equation}
        \tilde{A}_{ij} =
        \begin{cases}
            A_{ij} & 1\leq i,j\leq N,
            \\
            B_{j} & i=N+1,\,1\leq j\leq N,
            \\
            -B_{i} & 1\leq i\leq N,\,j=N+1.
        \end{cases}
    \end{equation}
    Note that $\tilde{A}_{ij}$ transforms as a $(0,2)$-tensor. 
\end{remark}

\begin{table}[hbt]
    \centering
    \begin{tabular}{ccc}
        \toprule
        Space &  Corresponding form & \# of independent components
        \\[2pt]
        \midrule
        $\Lambda^3 \K^{N}$ & $T$ & $\displaystyle\frac{N(N-1)(N-2)}{6}$
        \\[8pt]
         $\Lambda^2\K^N$ & $g^0$ & $\displaystyle\frac{N(N-1)}{2}$
        \\[8pt]
        $\Lambda^3 \K^{N+1}$ & $\tilde{T}$ & $\displaystyle\frac{(N+1)N(N-1)}{6}$
        \\[8pt]
        $\Lambda^2\K^{N}$ & $A$ & $\displaystyle\frac{N(N-1)}{2}$
        \\[8pt]
        $\K^N$ & $B$ & $N$
        \\[8pt]
        $\Lambda^2\K^{N+1}$ & $\tilde{A}$ & $\displaystyle\frac{(N+1)N}{2}$
        \\[8pt]
        $\Lambda^3\K^{N+2}$ & $\Omega$ & $\displaystyle\frac{(N+2)(N+1)N}{6}$
        \\[8pt]
        \bottomrule
    \end{tabular}
    \caption{Relation among the forms appearing in
    formulas~\eqref{5},~\eqref{eq:omcoeff}, and~\eqref{eq:Atdef}.}
    \label{tab:tabid}
\end{table}

\subsection{Projective-reciprocal invariance}

In~\cite{VerVit2}, the authors proved that the leading coefficient $g^{ij}$ of a
second-order homogeneous Hamiltonian operator is invariant up to a conformal
factor under projective transformations of the field variables:
\begin{equation} 
    \rho:\K^{N}\longrightarrow \K^{N}, \quad \label{trans}
    \rho^i(u)=\tilde{u}^{i}:=\frac{a^i_lu^l+a^i_{N+1}}{a^{N+1}_lu^l+a^{N+1}_{N+1}},
\end{equation}
where $a=(a_{j}^{i})\in\PGL(N+1,\K) = \GL(N+1,\K)/\left\{cI \,\middle|\,
c\in\K\setminus\{0\}\right\}$.  In particular, defining:
\begin{equation}
    A(u):=a^{N+1}_k u^k+a^{N+1}_{N+1},
    \label{eq:Au}
\end{equation}
the two-form $g_{ij}$ transforms into $\bar{g}_{ij}$ under the pullback $\rho^*(g)$ as
follows:
\begin{equation}
    \bar{g}_{ij}\ud\tilde{u}^i\wedge d\tilde{u}^j = \frac{1}{A(u)^3}\,g_{kl} \ud u^k \wedge \ud u^l,
\end{equation}
where $g$ and $\bar{g}$ have the same structure as in equation~\eqref{3},
see~\cite[Corollary 5]{VerVit2}. Thus the leading coefficient of a second-order homogeneous Hamiltonian
operator is form invariant with respect to the action of the group of
projective transformations $\PGL(N+1,\K)$.

Moreover, in Theorem 6 of the same paper, it was proved that the whole operator
in Doyle-Pot\"emin form~\eqref{2} is preserved under the action of
projective-reciprocal transformations { involving the $x$-variables only}. We briefly recall that a reciprocal
transformation is a non-local change of independent variables of the following
form:
\begin{subequations}\label{rec}
\begin{align}
    \ud\tilde{x}&=\left(\alpha^0_iu^i+\alpha^0_0\right)\ud x+\left(\alpha^0_iV^i+\beta\right)\ud t,
    \\
    \ud\tilde{t}&=\left(\beta_iu^i+c\right)\ud x+\left(\beta_iV^i+d\right)\ud t,
\end{align}\end{subequations}
where $\alpha_i^0$, $\alpha_0^0$, $\beta$, $\beta_i$, $c$, and $d$ are arbitrary constants.
For a whole explanation of the theory of reciprocal transformations in the context
of systems of conservation laws we refer to \cite{FerPav1}. The combination of \eqref{trans}
and \eqref{rec} is what is called a projective-reciprocal transformation.

The following result holds:

{\begin{theorem}
    Systems of conservation laws $u^i_t=(V^i(u))_{,x}$ possessing second-order
Hamiltonian formulation are invariant in form under projective-reciprocal transformations
(\eqref{trans}-\eqref{rec}).
    \label{thm:prinv}
\end{theorem}}

Before proving the Theorem we present the following Lemmas:

\begin{lemma}
    The covector $W_i$ is invariant in form under projective transformations \eqref{trans} up to a conformal factor.
    \label{lem:proj}
\end{lemma}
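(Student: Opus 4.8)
The plan is to reduce the statement to the tensorial behaviour of a constant two-form on $\K^{N+1}$, exactly as the transformation law of $g$ in \cite[Corollary 5]{VerVit2} is governed by the three-form $\tilde{T}$. First I would homogenise $W$: to the covector $W_j=A_{jl}u^l+B_j$ I attach the constant alternating two-form $\tilde{A}\in\Lambda^2\K^{N+1}$ of \eqref{eq:Atdef}, observing that
\[
   W_j \;=\; \tilde{A}_{j\beta}\,u^{\beta}\big|_{u^{N+1}=1},\qquad \beta=1,\dots,N+1,
\]
so that $W$ is the restriction to the affine chart $\{u^{N+1}=1\}$ of the contraction $\iota_E\tilde{A}$ of $\tilde{A}$ with the Euler vector field $E=u^\beta\partial_{u^\beta}$. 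This realises $W$ as the one-form attached to $\tilde{A}$, and it is the natural lower-degree analogue of the construction $G_{ij}=\tilde{T}_{ij\beta}u^\beta$ used for $g$ in \eqref{eq:Gdef}.

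Next I would lift the projective transformation \eqref{trans} to its linear representative $a=(a^\alpha_\beta)\in\GL(N+1,\K)$ acting by $u\mapsto(a^\alpha_\beta u^\beta)$, and let $\tilde{A}$ transform as an honest $(0,2)$-tensor, i.e.\ pass to $\tilde{A}'=(a^{-1})^{*}\tilde{A}$ with components $\tilde{A}'_{\gamma\delta}=(a^{-1})^{\alpha}_{\gamma}(a^{-1})^{\beta}_{\delta}\,\tilde{A}_{\alpha\beta}$. Since $\tilde{A}$ has constant coefficients, so does $\tilde{A}'$; because it is alternating, its $1\le i,j\le N$ block is a constant skew matrix $\bar A$ and its column $\delta=N+1$ is a constant vector $\bar B$. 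This already yields the \emph{form} invariance: the transformed covector is again of the type \eqref{sysm}, namely $\bar W_i=\bar A_{il}\tilde{u}^l+\bar B_i$ with $\bar A_{il}=-\bar A_{li}$.

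It remains to identify the conformal factor, and this dehomogenisation step is where the real work lies. Writing the section $u\mapsto(u,1)$, its image under $a$ is $\big(a^i_l u^l+a^i_{N+1},\,A(u)\big)$ with $A(u)$ as in \eqref{eq:Au}, so the new affine coordinates are $\tilde{u}^i=(a^i_lu^l+a^i_{N+1})/A(u)$ and the homogeneous representative used to evaluate $\bar W$ is $A(u)^{-1}a\,(u,1)$. I would then express the Jacobian $J^i_k=\partial\tilde{u}^i/\partial u^k$ in terms of $a$ and the cocycle $A(u)$ and check, just as for $g$, that clearing the denominators produces a single uniform power of $A(u)$ across all components. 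The expected outcome is
\[
   \rho^{*}\bar W \;=\; \frac{1}{A(u)^{2}}\,W ,
\]
that is, $W$ transforms as a covector up to the conformal factor $A(u)^{-2}$, the degree-two analogue of the factor $A(u)^{-3}$ governing the two-form $g$.

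The main obstacle is precisely this bookkeeping: one must verify that the contraction-with-$E$ construction commutes with the linear action of $a$ modulo the single scalar $A(u)$, so that the same power of $A(u)$ appears in every component and the affine structure $\bar A\,\tilde{u}+\bar B$ survives intact. A purely computational alternative is to substitute \eqref{trans} directly into $\bar W_i\,\ud\tilde{u}^i$, expand $\ud\tilde{u}^i$ by the quotient rule, and collect terms; this is routine but error-prone, and I would use it only as an independent check of the power $2$ obtained from the homogenisation argument.
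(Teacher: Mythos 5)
Your proposal is correct, and it reaches the Lemma by a genuinely different organization than the paper. The paper's proof is a direct affine computation: it substitutes \eqref{trans} into $\bar W_i\,\ud\bar u^i$, expands $\ud\bar u^i$ by the quotient rule, and uses the skew-symmetry of $\bar A_{ij}$ to cancel the terms quadratic in $u$, reading off constant coefficients $A_{ls}$, $B_l$ together with the factor $1/A(u)^2$ --- i.e.\ exactly the ``routine but error-prone'' computation you relegate to an independent check. Your route instead lifts everything to $\K^{N+1}$: writing $W=s^*(\iota_E\tilde A)$ with $s(u)=(u,1)$, form invariance becomes immediate from the tensoriality of the constant two-form $\tilde A$; this is the conceptual gain, and it makes the treatment of $W$ parallel to the treatment of $g$ through $\tilde T$ in \Cref{th:corresp}, a parallel the paper's own proof does not exploit. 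The step you leave as ``bookkeeping'' does close, and in one line rather than by Jacobian expansion: since $\bar s\circ\rho=A(u)^{-1}(a\circ s)$ and the coefficients of $\iota_E\tilde A'$ are linear, for any map of the form $F=\lambda G$ with $G$ linear one has
\[
  F^*\bigl(\iota_E\tilde A'\bigr)
  \;=\;\lambda^2\, G^*\bigl(\iota_E\tilde A'\bigr)\;+\;\lambda\,\tilde A'(G,G)\,\ud\lambda
  \;=\;\lambda^2\, G^*\bigl(\iota_E\tilde A'\bigr),
\]
because the cross term carries $\tilde A'(G,G)=0$ by antisymmetry; combining this with $a^*(\iota_E\tilde A')=\iota_E(a^*\tilde A')$ (the Euler field is equivariant under linear maps) yields $\rho^*\bar W=A(u)^{-2}\,W$ exactly as you predict. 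So the mechanism behind your ``uniform power of $A(u)$'' is precisely the skew-symmetry of $\tilde A$, the same cancellation the paper performs explicitly through its displayed relations, and your conformal factor $A(u)^{-2}$ agrees with the $1/A(u)^2$ in the paper's formulas for $A_{ls}$, $B_l$. One minor caveat: your identification $W_j=\tilde A_{j\beta}u^\beta|_{u^{N+1}=1}$ holds with the sign convention of the wedge expression in \eqref{eq:Atdef}; the component case list there has $\tilde A_{i\,N+1}=-B_i$, which would flip the sign of $B$ (the paper is internally inconsistent on this point), but this affects nothing in the argument.
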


\begin{proof}
    Let us firstly observe that in the transformed frame
    \begin{equation}
        \bar{W} = \bar{W}_i \ud\bar{u}^i = 
        \left(\bar{A}_{is} \bar{u}^s+\bar{B}_i\right)\ud\bar{u}^i.
    \end{equation}
    Then by applying the transformation rule~\eqref{trans}, we have:
    \begin{equation}
        \ud\bar{u}^i=\frac{A(u)a^i_s\ud u^s-\left(a^i_su^s+a^i_{N+1}\right)a^{N+1}_l\ud u^l}{A(u)^2},
    \end{equation}
    where $A(u)$ is defined in equation~\eqref{eq:Au}. So, using the
    skew-symmetry of $\bar{A}_{ij}$ we obtain the following relations:
    \begin{align}\begin{split}
 &   a^i_lu^l\bar{A}_{ij}a^j_su^s=0\qquad
    a^i_{N+1}\bar{A}_{ij}a^j_{N+1}=0\\ 
  &     -a^i_{N+1}\bar{A}_{ij}a^j_sa^{N+1}_l=a^j_{N+1}\bar{A}_{ij}a^j_sa^{N+1}_l\end{split}
    \end{align}
    which give:
    \begin{align}\begin{split}
       &A_{ls}=\frac{1}{A(u)^2}\left[a^i_s\bar{A}_{ij}a^j_l-\bar{B}_ia^i_sa^{N+1}_l+\bar{B}_ia^i_la^{N+1}_s\right] 
       \\
           &B_l=\frac{1}{A(u)^2}\left[a^i_l\bar{A}_{ij}a^j_{N+1}-\bar{B}a^i_{N+1}a^{N+1}_l+\bar{B}_ia^i_la^{N+1}_{N+1}\right].
           \end{split}
       \end{align}
    So, we have:
    \begin{equation}
        W=W_l\ud u^l=\left(A_{ls}u^s+B_l\right)\ud u^l,
    \end{equation}
    i.e.\ $W$ has the same form of $\bar{W}$. Note that the skew-symmetry is
    also preserved, i.e.\ $A_{ij}=-A_{ji}$.
\end{proof}

\begin{lemma}
    A hydrodynamic-type system of the form \eqref{sysm} is invariant under
    the inversion of independent variables:
    \begin{equation}
        \label{eq:xtexch}
        \ud\bar{x} = \ud t, \quad
        \ud\bar{t} = \ud x.
    \end{equation}
    \label{lem:exchange}
\end{lemma}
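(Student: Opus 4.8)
The plan is to convert the exchange \eqref{eq:xtexch} into a purely algebraic operation on the flux and then show that the outcome still satisfies the structural conditions \eqref{eq:37} for the \emph{same} metric, so that the solution of \eqref{eq:37} recalled in \eqref{sysm} applies verbatim. Writing the system \eqref{11} as $u^i_t=V^i_{,j}u^j_x$ and using $\partial_{\bar x}=\partial_t$, $\partial_{\bar t}=\partial_x$, the inversion turns it into $u^i_{\bar x}=V^i_{,j}u^j_{\bar t}$, i.e.\ $u^i_{\bar t}=(M^{-1})^i_{\,j}u^j_{\bar x}$, where $M=(V^i_{,j})$ is the Jacobian. Hence the exchanged system is again quasilinear, with a new flux $\bar V$ fixed up to an additive constant by $\bar V^i_{,j}=(M^{-1})^i_{\,j}$, and everything reduces to showing that $\bar V$ is of the form \eqref{sysm} with the original $g$.

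First I would record \eqref{eq:451} in matrix form. With $G=(g_{ij})$ skew-symmetric, the content of \eqref{eq:451} is exactly that $GM$ is skew-symmetric, equivalently $M^{\mathsf T}G=GM$. From $M^{\mathsf T}=GMG^{-1}$ one gets $(GM^{-1})^{\mathsf T}=-GM^{-1}$ in one line, so $GM^{-1}$ is skew-symmetric as well; that is, $\bar V$ already satisfies \eqref{eq:451} with the same metric. Next I set $\bar W_q:=g_{qk}\bar V^k$ and compute $\partial_l\bar W_q=g_{qk,l}\bar V^k+g_{qk}\bar V^k_{,l}=T_{qkl}\bar V^k+(GM^{-1})_{ql}$, using $g_{qk,l}=T_{qkl}$. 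Both summands are skew-symmetric in $(q,l)$: the first because $T$ is totally skew, the second by the step just made.

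The a priori hard point is the second-order condition \eqref{eq:38}, since written naively it asks one to differentiate the inverse Jacobian $M^{-1}$. The observation that removes the obstacle is that \eqref{eq:38} for $\bar V$ is equivalent (exactly as for the original $V$) to $\bar W_q$ being affine in $u$, i.e.\ to the matrix $C_{ql}:=\partial_l\bar W_q$ being constant. Now $C_{ql}$ is skew in $(q,l)$ by the previous paragraph, while $\partial_m C_{ql}=\partial_m\partial_l\bar W_q$ is symmetric in $(m,l)$. Chaining these two symmetries, $\partial_m C_{ql}=\partial_l C_{qm}=-\partial_l C_{mq}=-\partial_q C_{ml}=\partial_q C_{lm}=\partial_m C_{lq}=-\partial_m C_{ql}$, forces $\partial_m C_{ql}=0$. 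Thus $\bar W_q=\bar A_{ql}u^l+\bar B_q$ with $\bar A$ skew-symmetric and constant, so $\bar V=g^{-1}\bar W$ is of the form \eqref{sysm}, which proves the invariance. The real difficulty is therefore not any computation but the twofold recognition that \eqref{eq:38} collapses to the constancy of $C_{ql}$ and that skew-symmetry together with the equality of mixed partials already forces that constancy, bypassing the derivative of $M^{-1}$ entirely.
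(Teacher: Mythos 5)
Your proof has a genuine gap at its very first step, and it is fatal to the whole strategy. You keep the original dependent variables $u$ and assert that the exchanged system $u^i_{\bar t}=(M^{-1})^i_{\,j}u^j_{\bar x}$ is again a system of conservation laws, with a flux $\bar V(u)$ determined by $\bar V^i_{,j}=(M^{-1})^i_{\,j}$. This requires each row of $M^{-1}$ to be a closed one-form in the $u$-variables, and that integrability is neither automatic nor true: the inverse of a Jacobian matrix is the Jacobian of the inverse map with respect to the \emph{new} variables, not a Jacobian with respect to $u$. Concretely, take $N=4$, $T_{123}=1$ (all other independent components of $T$ zero), $g^0_{14}=1$, $A_{14}=A_{23}=1$, $B=0$. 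Then \eqref{sysm} gives the admissible flux
\begin{equation*}
V=\Bigl(u^1,\;u^2\,\tfrac{u^1+1}{u^1},\;u^3\,\tfrac{u^1+1}{u^1},\;u^4\Bigr),
\end{equation*}
whose Jacobian $M$ is invertible for generic $u$, and the second row of $M^{-1}$ is
$\bigl(\tfrac{u^2}{u^1(u^1+1)},\,\tfrac{u^1}{u^1+1},\,0,\,0\bigr)$. This row is not a gradient, since
$\partial_{u^2}\tfrac{u^2}{u^1(u^1+1)}=\tfrac{1}{u^1(u^1+1)}\neq\tfrac{1}{(u^1+1)^2}=\partial_{u^1}\tfrac{u^1}{u^1+1}$,
so no function $\bar V^2(u)$ with $\bar V^2_{,j}=(M^{-1})^2_{\,j}$ exists: the exchanged system is not in conservation form in the variables $u$ at all. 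Your subsequent algebra (skewness of $GM^{-1}$, and the elegant ``skew plus symmetric mixed partials implies constant'' chain) is correct, but it is conditional on an object that in general does not exist.

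The correct move --- and the one the paper makes --- is that the exchange of $x$ and $t$ must be accompanied by the swap of densities and fluxes, $\bar u^i=V^i$, $\bar V^i=u^i$, as in \eqref{change}: the closed form $u^i\,\ud x+V^i\,\ud t$ becomes $V^i\,\ud\bar x+u^i\,\ud\bar t$, so conservation form is automatic in the \emph{new} dependent variables. The price is that the metric changes: the paper shows $\bar g_{ij}=g_{is}V^s_{,j}=T_{ijk}\bar u^k+A_{ij}$, which is again of the canonical form \eqref{3} but with $\bar g^0=A$, and the transformed flux is again of the form \eqref{sysm} with $\bar A=g^0$ and $\bar B=-B$. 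In other words, the lemma asserts \emph{form}-invariance, with the data transforming by the involution $(T,g^0,A,B)\mapsto(T,A,g^0,-B)$; it does not assert, as you do, invariance with the \emph{same} metric $g$ and the same field variables. That stronger statement is false, precisely because, as the example above shows, the system ceases to be conservative in the old variables after the exchange.
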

     
\begin{proof}
    The proof of this lemma is carried out with the same technique as \cite[Theorem 4]{FPV17:_system_cl}. We start noting that
    the exchange of independent variables implies the following transformation
    on the dependent ones:
    \begin{equation}\label{change}
        \bar{u}^i=V^i\qquad \bar{V}^i=u^i.
    \end{equation}
    Then, we claim that the transformed system is still Hamiltonian with
    a second-order Hamiltonian structure given by the following two-form:
    \begin{equation}
        \bar{g}_{ij}=g_{is}V^s_j.
    \end{equation}
    That is, we have to prove that in the new variables $\bar{g}$ has
    the same canonical form as $g$, see equation \eqref{3}.

    This follows from:
    \begin{equation}
        V^i_j=\left(V^i\right)_{,j}=g^{is}\left(A_{sl}+T_{slk}V^k\right)   
    \end{equation}
    which implies
    \begin{equation}
        \bar{g}_{ij}=g_{is}V^s_j=g_{is}g^{sl}\left(A_{lj}+T_{ljk}V^k\right)=T_{ijk}V^k+A_{ij}
        =T_{ijk}\bar{u}^k+A_{ij}.
    \end{equation}
    That is, $\bar{g}$ has the same form as in equation \eqref{3}, because it is 
    skew-symmetric, linear in $\bar{u}$ with the identification $\bar{T}_{ijk}=T_{ijk}$ and $\bar{g}^0_{ij}=A_{ij}$, where $\bar{T},\bar{g}^0$ are constant and skew-symmetric with respect to any exchange of indices.

    Finally, we prove that the structure of the system is preserved, i.e. it has
    the following form:
    \begin{equation}
        \bar{V}^i = \bar{g}^{ij}\bar{W}_j = 
        \bar{g}^{ij}(\bar{A}_{js}\bar{u}^s+\bar{B}_j).
        \label{eq:Vtd}
    \end{equation}
    By \eqref{sysm}, we have that
    \begin{equation}
    T_{isk}u^kV^s+g^0_{is}V^s=A_{is}u^s+B_i,
    \end{equation}
    and using \eqref{change} the transformed relation is:
    \begin{equation}
{T}_{isk}\bar{V}^k\bar{u}^s+g^0_{is}\bar{u}^s={g}_{is}^0\bar{V}^s+B_i.
    \end{equation}
    So, equation \eqref{eq:Vtd} follows setting
    $\bar{T}_{ijk}=T_{ijk}$, $\bar{g}_{ij}^0=A_{ij}$, $\bar{A}_{ij}=g^0_{ij}$ 
    and $\bar{B}_i=-B_i$. This ends the proof of the Lemma.
\end{proof}

\begin{proof}[Proof of Theorem \ref{thm:prinv}]
    We notice that proving the invariance of the system under projective-reciprocal transformations is equivalent to
    prove its projective invariance and its invariance under complete reciprocal transformations \eqref{rec}.

    The first statement follows from Lemma \ref{lem:proj}, the explicit form of $V$ given in
    equation \eqref{3}, and the fact that $g$ is invariant in form under projective 
    transformations, see \cite[Corollary 5]{VerVit2}.

    To prove the reciprocal invariance, we recall that a general reciprocal
    transformation \eqref{rec} can be viewed as
    \begin{equation}
        (x\text{-transformation})\circ (x\leftrightarrow t)\circ(x\text{-transformation}),
    \end{equation}
    where by $x$-transformation we mean a reciprocal transformation changing the variable $x$ only,
    and by $x \leftrightarrow t$ we mean the independent variables exchange, see for instance
    \cite{FPV17:_system_cl}.
    
    In \cite[Theorem 6]{VerVit2}, the invariance of Hamiltonian formalism systems under $x$-translations 
    has already been proved. The invariance under independent variable exchange was shown in
    Lemma \ref{lem:exchange}, {and this concludes} the proof of the Theorem.
\end{proof}

\section{Linear line congruences and hydrodynamic type systems}\label{sec6}

{We now want to deeply understand the geometric interpretation of operators and systems  in the context of classical projective geometry. To this aim, we recall that} in~\cite{agafer1,agafer3, agafer2}, the authors presented an interpretation of
hydrodynamic-type systems of conservation laws~\eqref{11}, involving the
classical theory of congruence of lines in the projective space. Using this
theory, some basic concepts of homogeneous quasilinear systems such as shocks,
rarefaction curves and linear degeneracy  are treated by means of geometric
properties of the associated algebraic variety. 

In particular, it has been shown that to every conservative hydrodynamic-type
system one can associate a congruence of lines
\begin{equation}\label{linc}
  y^i=u^iy^{N+1}+V^iy^{N+2}, \qquad i=1,2,\dots N,
\end{equation}
in an auxiliary projective space $\mathbb{P}^{N+1}$ with homogeneous
coordinates $[y^1:\ldots : y^{N+2}]$. So,  for each field variable
$u=(u^1,\ldots,u^N)$ one  considers the line $L_u\subset \mathbb{P}^{N+1}$
spanned by the two points:
\begin{equation}
    P_u\,=\,[u^1:\ldots:u^N:1:0],
    \quad Q_u\,=\,[V^1:\ldots:V^N:0:1].
    \label{eq:PuQu}
\end{equation}
Consequently, the Pl\"ucker coordinates $p^{ij}=p^{ij}(u)$ of
$L_u$ are defined as the determinants of all $2\times 2$ submatrices of 
\begin{equation}
    \begin{pmatrix}
        u^1&u^2& \ldots&u^N&1&0\\
        V^1&V^2& \ldots&V^N&0&1
    \end{pmatrix}
\end{equation} 
or explicitly:
\begin{align}\begin{split}\label{plu}
&p^{kl}\,=\,u^kV^l-u^lV^k,\qquad p^{r,N+1}\,=\,-V^r,\\ &p^{r,N+2}\,=\,u^r,\qquad p^{N+1,N+2}\,=\,1~.\end{split}
\end{align}
Notice that the Pl\"ucker coordinates define an embedding of the Grassmannian $\Gr(2,N+2)$ in a projective space $\PP^M$ with
$M=\binom{N+2}{2}-1$.

Finally, recall that a congruence~\eqref{linc} is said to be \emph{linear} if its
closure in $\Gr(2,N+2)$ is defined by linear relations between the Pl\"ucker
coordinates~\eqref{plu} of its lines. {We observe that this is exactly our case, indeed} using~\eqref{sysm}, one can invert $g^{ij}$, apply~\eqref{3} and obtain $N$
linear equations between the Pl\"ucker coordinates~\eqref{plu} of the
lines~\eqref{linc}:
\begin{equation}
    \label{eqs}
    \begin{split}
    \frac{1}{2}T_{jkl}\left(u^lV^k-u^kV^l\right)+g^0_{jk}V^k-A_{jl}u^l-B_j=0, \quad j=1,2,\dots N,
  \end{split}  
\end{equation}
see~\cite[Theorem 11]{VerVit2}.

{For sake of simplicity,  we now study in detail the line congruence associated to an example
coming {from the $2$-component case}.}

\begin{example}
    \label{exaN2}
    We consider the case $N=2$. Then, by applying the transformation
    $u^{3}\mapsto u^{3}/{g_{12}^0}$ the {homogeneized} operator reduces to
    \begin{equation}
        \mathcal{P}=\begin{pmatrix}
            0&1\\-1&0
        \end{pmatrix}\partial_x^2,
    \end{equation}
   { see also \cite[Section 2.3.1]{VerVit2}}. Here, all the other terms
    in~\eqref{1} vanish due to the constant entries of the leading coefficient
    $g^{ij}$. Now, from $V^i=g^{is}(A_{sl}u^s+B_s u^{N+1})$ we obtain
    \begin{equation}
        V^1=A_{12}u^1-B_2u^{3},
        \qquad
        V^2=A_{12}u^2+B_1u^{3}
    \end{equation}
    so that the line $L_u\subset\PP^3$, with $u=(u^1:u^2:u^3)\in\PP^2_u$, is
    spanned by:
    \begin{subequations}
        \begin{align}
            P_u &= \left[u^1 : u^2: u^{3}:0\right]
            \\
            Q_u &=\left[A_{12}u^1-B_2u^{3}:A_{12}u^2+B_1u^{3}:0:u^{3}\right]~.
        \end{align}
    \end{subequations}
    The Pl\"ucker coordinates of $L_u$ are obtained from the matrix
    \begin{equation}
    r\colon \begin{pmatrix}
    u^1&u^2&u^{3}&0\\
    A_{12}{u^1}-B_2u^{3}&A_{12}u^2+B_1u^{3}&0&u^{3}
    \end{pmatrix}
    \end{equation}
    and they are:   %
    \begin{align}\begin{split}
    &p^{12}=B_1u^1u^{3}+B_2u^2u^{3},  \quad 
    p^{13}=-A_{12}u^1u^{3}+B_2(u^{3})^2,\quad
    p^{14}=u^1u^{3},\\
    &p^{23}=-A_{12}u^2u^{3}-B_1(u^{3})^2, \quad
    p^{24}=u^2u^{3}, \quad 
    p^{34}=(u^{3})^2~.
    \end{split}
    \end{align}
    The line $L_u$ is represented by the point
    $(p^{12}:\ldots:p^{34})\in\PP^5$. Notice that all six $p^{ij}$ are
    multiples of $u^{3}$ and thus the line $L_u$ is also represented by
    \begin{align}\begin{split}
        &p^{12}=B_1u^1+B_2u^2,  \quad
        p^{13}=-A_{12}u^1+B_2u^{3},\quad
        p^{14}=u^1,\\
        &p^{23}=-A_{12}u^2-B_1u^{3}, \quad
        p^{24}=u^2, \quad
        p^{34}=u^{3}~.
    \end{split}
    \end{align}
    Since all $p^{ij}$ are linear in the $u^k$, the image of $\PP^2_u$ in
    $\Gr(2,4)\subset\PP^5$ is again a linearly embedded $\PP^2$.  It is
    well-known that there are two types of such planes in the Grassmannian, one
    type parametrizes all the lines in $\PP^3$ through a given point and the
    other type parametrizes all lines in $\PP^3$ contained in a plane. All the
    lines $L_u$ in fact contain a point $R$:
    \begin{equation}
        L_u\,=\,\langle P_u,Q_u\rangle\;=\;\langle P_u,R\rangle,
        \qquad 
        R\,:=\,\left[-B_2:B_1:-A_{12}:1\right]~.
    \end{equation}
    Thus for $N=2$ the lines associated to a hydrodynamic type system are
    exactly all the lines in $\PP^3$ through a point.

    Now we consider the linear equations satisfied by the Pl\"ucker coordinates
    of the lines $L_u$.  These coordinates are not all linear independent,
    indeed~\eqref{eqs} provides two linear relations, which are the first two
    in~\eqref{3451}, and it is not hard to find two more:
    \begin{subequations}\label{3451}
    \begin{align}\label{eq1}
    p^{13}+A_{12}p^{14}-B_2p^{34}&=0, \\ \label{eq2}
    p^{23}+A_{12}p^{24}+B_1p^{34}&=0\\\label{eq3}
    p^{12}-B_1p^{14}-B_2p^{24}&=0, \\\label{eq4}
    A_{12}p^{12}+B_1p^{13}+B_2p^{23}&=0
    \end{align}
    \end{subequations}
    The four relations~\eqref{3451} are in fact linearly dependent since
    multiplying the first three by $B_1$, $B_2$, and $A_{12}$ respectively and
    summing gives the last equation. The first three equations are independent
    and they define exactly the conguence of lines $L_u$. In fact, the image of
    the four dimensional $\Gr(2,4)$ in $\PP^5$ is defined by the so-called
    Pl\"ucker quadric:
    \begin{equation}
        p^{12}p^{34}\,-\,p^{13}p^{24}\,+\,p^{14}p^{23}\,=\,0~.
        \label{eq:pluckerquadr}
    \end{equation}
    From the first two equations, which are~\eqref{eqs}, we find:
    \begin{equation}
        p^{13}=-A_{12}p^{14}+B_2p^{34},
        \qquad
        p^{23}=-A_{12}p^{24}-B_1p^{34}, 
    \end{equation}
    and substituting in the Pl\"ucker quadric~\eqref{eq:pluckerquadr} we obtain
    \begin{equation}
        p^{34}(p^{12} - B_1p^{14} - B_2p^{24})\,=\,0~.
    \end{equation}
    Thus \eqref{eq1} and \eqref{eq2} define a $\PP^3\subset\PP^5$ that cuts
    the quadric in the union of two planes, one is defined by the two equations
    and $p^{34}=0$ whereas the other is defined by first three equations.  Thus
    the congruence $L_u$ is defined by \eqref{eq1}, \eqref{eq2} and \eqref{eq3}. 
    { Notice that the first two equations by themselves do not suffice to define 
    the congruence associated to a hydrodynamic type system.}

    In~\cite[Example 4.9]{DePoiFaenziMezzettiRanestad17} a three-form on $\K^4$
    is used to define a system of three linear equations
    $p_{12}=p_{13}=p_{23}=0$.  One easily verifies that these define the lines
    in $\PP^3$ that pass through the point $(0:0:0:1)$, similar to the lines in $L_u$
    that pass through $R$.

    Also our equations~\eqref{3451} can be re-written using the alternating
    three form $\Omega= (\omega_{ijk})$ as in~\eqref{eq:omcoeff}, so
    $\omega_{ijk}=-\omega_{jik}$, $\omega_{ijk}=-\omega_{ikj}$ with
    $\omega_{123}=1$, so that $\tilde{T}=\ud u^1\wedge \ud u^2\wedge \ud u^3$,
    $\omega_{124}=A_{12}$, $\omega_{134}=B_1$ and $\omega_{234}=B_2$:
    \begin{subequations}
        \begin{align}
            0&=\omega_{123}p^{23}+\omega_{124}p^{24}+\omega_{134}p^{34}=p^{23}+A_{12}p^{24}+B_1p^{34}
            \\
            0&=\omega_{213}p^{13}+\omega_{214}p^{14}+\omega_{234}p^{34}=-p^{13}-A_{12}p^{14}+B_2p^{34}
            \\
            0&=\omega_{312}p^{12}+\omega_{314}p^{14}+\omega_{324}p^{24} =p^{12}-B_1p^{14}-B_2p^{24}
            \\
            0&=\omega_{412}p^{12}+\omega_{413}p^{13}+\omega_{423}p^{23}=A_{12}p^{12}+B_1p^{13}+B_2p^{23}
        \end{align}
        \label{902}%
    \end{subequations}
    This can be compared to formula~\eqref{eqs}, where one has to recall
    that the 1/2 factor comes from the symmetrisation.
    \hfill$\square$
\end{example}
We now increase the number of components to $N=4$ and analyze the following case. 
\begin{example}\label{exaN4}
    We now consider the case $N=4$. We choose an operator in Doyle-Pot\"{e}min canonical
    form  determined by the following alternating $4\times 4$ matrix
    $g=(g_{ij})$, which does not depend on $u^4$ and which we homogenize with a
    variable $u^5$:
    \begin{equation}
        \label{ese2}
        g\,=\,g(u)\,=\,\begin{pmatrix}
        0&u^3+g^0_{12}u^5&-u^2+g^0_{13}u^5&g_{14}^0u^5\\
        -u^3-g^0_{12}u^5&0&u^1+g_{23}^0u^5&g_{24}^0u^5\\
        u^2-g^0_{13}u^5&-u^1-g^0_{23}u^5&0&g_{34}^0u^5\\
        -g^0_{14}u^5&-g^0_{24}u^5&-g^0_{34}u^5&0
        \end{pmatrix}~.
    \end{equation}
    The fluxes $V^i$ are determined by an alternating complex $4\times 4$
    matrix $A$ and a $B\in\K^4$ as $V=g^{-1}(AU+Bu^5)$ where
    $U:=(u^1,\ldots,u^4)$ as in~\eqref{sysm}.  To find $g^{-1}=(g^{ij})$ we
    observe that the determinant of an alternating matrix is the square of its
    Pfaffian, in this case one has
    \begin{equation}
        \Pf(g)\,=\,u^5\left[g_{14}u^1 + g_{24}u^2 + g_{34}u^3 + (g_{12}g_{34} - g_{13}g_{24} + g_{14}g_{23})u^5\right]~.
    \end{equation}
    Moreover, the inverse of $g$ can be obtained as
    \begin{align}\label{ginv}
    &g^{-1}\,=\,\frac{1}{\Pf(g)}g^\sharp,\qquad
    g^\sharp\,=\,\begin{pmatrix}
     0& -g_{34} & g_{24}& g_{23}\\
    g_{34}&    0& -g_{14}&  g_{13}\\
    -g_{24}&  g_{14}&    0& -g_{12}\\
     g_{23}& -g_{13}&  g_{12}&    0
    \end{pmatrix}~.\end{align}

    The lines associated to these fluxes are the $L_u\subset\PP^5$, with
    $u=(u^1:u^2:u^3:u^4:u^5)\in\PP^4_u$, where $L_u$ is spanned by
    \begin{subequations}
        \begin{align}
            P_u&=\left[u^1 : u^2: u^{3}:u^{4}:u^{5}:0\right]
            \\
            Q_u&=\left[V^1:V^2:V^3:V^4:0:u^{5}\right]~.
        \end{align}
    \end{subequations}
    Since $Q_u\in\PP^5$ we may multiply all coordinates by $\Pf(g)$ so that
    \begin{equation}
        Q_u\,=\,\left[V_\sharp^1:V_\sharp^2:V_\sharp^3:V_\sharp^4:0:\Pf(g)u^{5}\right],
        \qquad
        V_\sharp\,:=\,g^\sharp(AU+Bu^5)~.
    \end{equation}
    All coefficients of $P_u$, $Q_u$ are homogeneous of degree $1$ and $2$
    respectively in the $u^i$.  The fifteen Pl\"ucker coordinates
    $p^{12},\ldots,p^{56}$ of $L_u$ are then homogeneous of degree three in the
    $u^i$. A computations shows that they are all divisible by $u^5$, so the
    point $(p^{12}:\ldots:p^{56})\in\PP^{14}$ defined by $L_u$ has coordinates
    that are homogeneous of degree two in  $u^i$. For example:
    \begin{equation}
        p^{12}\,=\, (-g_{13}A_{14} + g_{14}A_{13})(u^1)^2 
        + \ldots
        +(g_{23}B_4 - g_{24}B_3 + g_{34}B_2)u^2u^5~.
    \end{equation}
    With computer algebra one can study this Pl\"ucker map from
    $\PP^4_u\rightarrow\Gr(2,6)\subset\PP^{14}$, but instead we will now
    consider the equations defining the image.

    The Pl\"ucker coordinates~\eqref{plu} of the lines $L_u$ satisfy six linear
    equations of the form
    \begin{equation}
        \omega_{ijk}p^{jk}=0, \qquad i=1,2,\dots N+2,
    \end{equation}
    where again $\omega=(\omega_{ijk})$ is an alternating three-form.  In the
    following table, the column under $p^{jk}$ lists the coefficients
    $\omega_{ijk}$, $i=1,\ldots,6$ of $p^{jk}$ in the six equations.

    \begin{subequations}\label{eqs22}\begin{equation}
    \begin{array}{rrrrrrrrr}
    &p^{12}&p^{13}&p^{14}&p^{15}&p^{16}&p^{23}&p^{24}&p^{25}\\
    \\
    \ud u^1&0&0&0&0&0&1&0&g^0_{12}\\
    \ud u^2&0&-1&0&-g^0_{12}&-A_{12}&0&0&0\\
    \ud u^3&1&0&0&-g^0_{13}&-A_{13}&0&0&-g^0_{23}\\
    \ud u^4&0&0&0&-g_{14}&-A_{14}&0&0&-g^0_{24}\\
    \ud u^5&g^0_{12}&g^0_{13}& g^0_{14}&0&-B_1&g^0_{23}&g^0_{24}&0\\
    \ud u^6&A_{12}&A_{13}&A_{14}& {B_1}&0&A_{23}&A_{24}&{B_2}
    \end{array}
    \end{equation}
    \vspace{2mm}
    \begin{equation}
        \begin{array}{rrrrrrrr}
    &p^{26}&p^{34}&p^{35}&p^{36}&p^{45}&p^{46}&p^{56}\\\\
    \ud u^1&A_{12}&0&g^0_{13}&A_{13}&g^0_{14}&A_{14}&B_1\\\ud u^2&0&0&g^0_{23}&A_{23}&g^0_{24}&A_{24}&B_2\\
    \ud u^3&-A_{23}&0&0&0&g^0_{34}&A_{34}&B_3\\
    \ud u^4&-A_{24}&0&-g^0_{34}&-A_{34}&0&0&B_4\\
    \ud u^5&-B_2&g^0_{34}&0&-B_3&0&-B_4&0\\
    \ud u^6&0&A_{34}&{B_3}&0&{B_4}&0&0\\
    \end{array}
    \end{equation}\end{subequations}
    The first four equations are those from~\eqref{eqs}.

    The equations~\eqref{eqs22} were analyzed in~\cite[Example
    4.11]{DePoiFaenziMezzettiRanestad17}, they define a subspace
    $\PP^8_\omega\subset \PP^{14}$.  The intersection
    {$X_\omega:=\PP^8_\omega\cap \Gr(2,6)$} is a four dimensional subvariety
    $X_\omega$ of $\Gr(2,6)$ which is isomorphic to (the Segre image of)
    $\PP^2\times\PP^2$.  In that paper one also finds that the first four
    equations define a union of two (irreducible) subvarieties, $X_\omega$ and
    $Y=Y_{\omega,\ud u^5\wedge \ud u^6}$.  The last two equations thus are needed to
    exclude the points in $Y$ that are not in $X_\omega$.

    Since the Pl\"ucker map $\PP^4_u\rightarrow \Gr(2,6)$ is easily seen to
    have degree one onto its image, we conclude that its image is $X_\omega$.
    The Pl\"ucker map thus induces a birational isomorphism
    $\PP^4_u\rightarrow\PP^2\times\PP^2$ (the base locus consists of two skew
    lines) which is similar to the birational isomorphism between $\PP^2$ and
    $\PP^1\times\PP^1$ which blows up two points and contracts the line in
    $\PP^2$ spanned by the base points.

    Consider now the general three-form
    \begin{equation}
        \omega\,=\, (\omega_{ijk})\,=\,\ud u^1 \wedge \ud u^2 \wedge \ud u^3 + \ud u^4 \wedge \ud u^5 \wedge \ud u^6~,
    \end{equation}
    so that the $\omega_{ijk}$ are:
    \begin{equation}
        \omega_{123} \,=\, \omega_{456}\,=\,1,\quad
        \omega_{ijk}\,=\, 0 
        \quad\mbox{for}\quad  (i, j, k) \neq (1, 2, 3), (4, 5, 6).
    \end{equation}
    The $N+2=6$  equations in the Pl\"ucker equations that define
    $X_\omega\subset \Gr(2,6)$ are:
    \begin{equation}
        \pm \omega_{123}p^{12}\,=\,0,
        \quad 
        \pm \omega_{123}p^{13}\,=\,0,
        \quad 
        \pm \omega_{123}p^{23}\,=\,0,
    \end{equation}
    and similarly there are three equations involving $\omega_{456}$. The lines
    with $p_{12}=p_{13}=p_{23}=0$ are those which meet the plane
    $x_1=x_2=x_3=0$.  In fact, such a line is spanned by
    $a=(a_1,\ldots,a_6),\,b=(b_1,\ldots,b_6)\in\K^6$, and the vectors
    $(a_1,a_2,a_3)$, $(b_1,b_2,b_3)\in\K^3$ are linearly dependent. So after
    taking a suitable linear combination of $a$ and $b$ we may assume that
    $a_1=a_2=a_3=0$ and then $a$ lies in $x_1=x_2=x_3=0$. Similarly the lines
    with $p_{45}=p_{46}=p_{56}=0$ are those which meet the plane
    $x_4=x_5=x_6=0$.  Thus any line in $X_\omega$ is spanned by a point
    $P=(x_1:x_2:x_3:0:0:0)$ and a point $Q=(0,0,0,y_1:y_2:y_3)$.
    The Pl\"ucker coordinates of this line are the $x_iy_j$ so $X_\omega\cong\mathbb{P}^2\times\mathbb{P}^2$ and
    $X_\omega$ is embedded via the
    Segre map in the $\mathbb{P}^8\subset\mathbb{P}^{14}$ defined by $p_{12}=p_{13}=p_{23}=p_{45}=p_{46}=p_{56}=0$.
    \hfill $\square$
\end{example}

As observed in the Examples~\ref{exaN2} and~\ref{exaN4} for $N=2,4$, besides
the $N$ linear equations for the Pl\"ucker coordinates given in \eqref{eqs},
these satisfy two more linear equations. The $N+2$ equations one finds can be
written as
\begin{equation}\label{omegaeqs}
    \omega_{ijk}p^{jk}=0,\qquad i=1,2,\dots N+2,
\end{equation}
where the coefficients $\omega_{ijk}$ are constant and skew-symmetric with respect
to any pair of indices.

The results of~\cite{DePoiFaenziMezzettiRanestad17} imply that these linear
equations, for a general alternating three-form $\omega$, define a smooth
subvariety $X_\omega$ of dimension $N$ of $\Gr(2,N+2)$.

We will show in Theorem~\ref{thm:cong} that for any even $N$ the Pl\"ucker
coordinates of the lines $L_u$ satisfy the equations~\eqref{omegaeqs}, where
$\omega$ is the three-form defined in~\eqref{eq:omcoeff}. Thus $X_\omega$ is
birationally isomorphic to $\PP^N_u$, the projective space which parametrizes
the lines $L_u$. In particular, we found an explicit parametrization of
$X_\omega$, which is thus a rational variety.

\begin{remark}
    \label{rem:discrepancy}
    Notice that $\dim\Gr(2,N+2)=2N$ and $\dim\PP^N_u=\dim X_\omega=N$. However,
    there are $N+2$ linear forms vanishing on $X_\omega$ which are linearly
    independent for $N>2$. In fact, in~\cite{DePoiFaenziMezzettiRanestad17} it
    is shown that taking $N$ of these linear forms defines a dimension $N$
    subset with two irreducible components one is $X_\omega$, the other is
    denoted by $Y$. One needs two more equations to find exactly $X_\omega$.
    The two components are very well visible in Example~\ref{exaN2} but in
    general we do not have a good description of the `extra' component $Y$
    defined by the first $N$ equations in terms of Hamiltonian structures. This
    also explains the ``dimensional gap'' between the description of
    second-order Hamiltonian operators of~\cite{VerVit2} in terms of
    alternating three-forms on $\mathbb{K}^{N+1}$, and the projective
    interpretation of hydrodynamic-type systems~\eqref{sysm} of Agafonov and
    Ferapontov~\cite{agafer1,agafer3,agafer2} on the projective space
    $\mathbb{P}^{N+1}$.
\end{remark}

{ Furthermore, we can also write the equations \eqref{omegaeqs} more intrinsically.}
Let us consider $\Omega=(\omega_{ijk})\in\Lambda^{3}\K^{N+2}$, and a line  $L\subset
\mathbb{P}^{N+1}$, i.e.\ a two-dimensional subspace in $\K^{N+2}$.
Recall that the pullback $\Omega^{*}L\in \Lambda^{1}\K^{N+2}$ of $\Omega$ to
$L$ is the contraction w.r.t.\ two indices, so in coordinates:
\begin{equation}
    (\Omega^{*}L)_{i} = \omega_{ijk}p^{jk}, 
    \label{eq:pullback}
\end{equation}
where $p^{jk}$ are the Pl\"ucker coordinates of $L$ in $\Gr(2,N+2)$.
{Then, the \emph{annihilation set of lines} for $\Omega$, denoted by $X_\Omega$,
is defined by those lines $L\subset\mathbb{P}^{N+1}$ whose pullback with respect to $\Omega$
vanishes:}
\begin{equation}
    X_\Omega\,=\,\{[L]\in \Gr(2,N+2):\;\Omega^{*}L\,=\,0\,\}~.
\end{equation}
In coordinates, $\Omega^{*}L=0$ is the set of $N+2$ linear equations in the Pl\"ucker coordinates of $L$
$\omega_{ijk}p^{jk}=0$.
The following result shows that the lines $L_u$ we considered satisfy the equations obtained from $\Omega$:

\begin{theorem}\label{thm:cong}
    The congruence of lines associated to a hydrodynamic-type system with
    second-order homogeneous Hamiltonian structure is the annihilation set
    of lines of the three-form $\Omega$ associated to the operator-system pair
    $(\mathcal{P},V)$ in the sense of Theorem \ref{thm11}.
\end{theorem}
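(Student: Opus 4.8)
The plan is to show that the $N+2$ linear forms $(\Omega^*L_u)_i = \omega_{ijk}p^{jk}$ all vanish when the Pl\"ucker coordinates $p^{jk}$ are those of the line $L_u$ spanned by $P_u$ and $Q_u$ in equation~\eqref{eq:PuQu}, where $\omega$ is the three-form built from $(\mathcal{P},V)$ as in~\eqref{eq:omcoeff}. First I would substitute the explicit Pl\"ucker coordinates~\eqref{plu}, namely $p^{kl}=u^kV^l-u^lV^k$, $p^{r,N+1}=-V^r$, $p^{r,N+2}=u^r$, and $p^{N+1,N+2}=1$, into the contraction $\omega_{ijk}p^{jk}$. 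Because $\omega$ decomposes according to~\eqref{eq:decompNppN} as $\tilde{T}+A\wedge \ud u^{N+2}+B\wedge \ud u^{N+1}\wedge \ud u^{N+2}$, the contraction splits into pieces indexed by whether $N+1$ and/or $N+2$ appear among the summation indices, and each piece pairs naturally with one of the block types of $p^{jk}$.

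The key computation is to handle the cases $1\le i\le N$ and $i\in\{N+1,N+2\}$ separately. For $i\le N$, I expect the terms involving $\tilde{T}$ to contribute $T_{ijk}(u^jV^k-u^kV^j)/1$ together with the $g^0$-terms coming from the $k=N+1$ entries of $\tilde{T}$ (via the definition~\eqref{5}), while the $A\wedge \ud u^{N+2}$ block contributes $-A_{ij}u^j$ (from $p^{j,N+2}=u^j$) and the $B$-block contributes $-B_i$ (from $p^{N+1,N+2}=1$). Collecting these should reproduce exactly equation~\eqref{eqs}, i.e.\ the relation $\tfrac12 T_{ikl}(u^lV^k-u^kV^l)+g^0_{ik}V^k-A_{il}u^l-B_i=0$, which is guaranteed to hold by~\cite[Theorem 11]{VerVit2} since $V$ arises from~\eqref{sysm}. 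Thus the first $N$ equations vanish essentially by construction. The remaining two equations, for $i=N+1$ and $i=N+2$, are the genuinely new content; I would verify these by a direct substitution, using that for $i=N+1$ only the $\tilde{T}$- and $B$-blocks survive (the $\ud u^{N+2}$ factor forces $N+2$ into the bracket), and for $i=N+2$ only the $A$- and $B$-blocks contribute.

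The main obstacle I anticipate is bookkeeping the skew-symmetry and the factor-of-two discrepancy noted in Example~\ref{exaN2}: the sum $\omega_{ijk}p^{jk}$ runs over ordered pairs $(j,k)$ and so doubles the $j<k$ contribution, whereas~\eqref{eqs} is written with an explicit $\tfrac12$. I would either fix the convention $p^{jk}=-p^{kj}$ and sum over all $j,k$, or sum over $j<k$ and drop the $\tfrac12$, and then check consistency against the worked cases~\eqref{902} and~\eqref{eqs22}. A clean way to avoid sign errors is to argue intrinsically: the two extra equations ($i=N+1,N+2$) should follow automatically from the first $N$ because the congruence $L_u$ lands in the Grassmannian, where the Pl\"ucker relations hold, and the pullback $\Omega^*L_u$ is a well-defined element of $\Lambda^1\K^{N+2}$ independent of the spanning vectors chosen.

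Alternatively, and more conceptually, I would prove the whole statement coordinate-free. The pullback $\Omega^*L_u$ is the contraction $\iota_{P_u}\iota_{Q_u}\Omega$ of $\Omega$ against the two spanning vectors $P_u=(u^1,\dots,u^N,1,0)$ and $Q_u=(V^1,\dots,V^N,0,1)$. Writing $\Omega=\tilde{T}+\tilde{A}\wedge \ud u^{N+2}$ as in~\eqref{dec1}, one computes $\iota_{Q_u}\Omega$ and then $\iota_{P_u}$ of the result; the $\ud u^{N+2}$ factor interacts only with the last coordinate of $Q_u$, which is $1$, isolating $\tilde{A}$, while $\tilde{T}\in\Lambda^3\K^{N+1}$ contracts against the $\K^{N+1}$-parts of $P_u,Q_u$. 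The vanishing of the resulting $1$-form is then precisely the compatibility condition~\eqref{eq:37} repackaged through~\eqref{sysm}, so the theorem reduces to the identity already established in Theorem~\ref{thm11}. I expect this intrinsic route to make the appearance of exactly $N+2$ equations transparent and to subsume both the $N=2$ and $N=4$ example verifications as special cases.
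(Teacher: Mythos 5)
Your handling of the first $N$ equations is correct and is exactly the paper's argument: substituting the Pl\"ucker coordinates \eqref{plu} into $\omega_{ijk}p^{jk}$ for $i\le N$ and using the definition \eqref{eq:omcoeff} of $\Omega$ reproduces \eqref{eqs}, which holds by \cite[Theorem 11]{VerVit2}. The gap lies in the two extra equations $i=N+1,N+2$, which you correctly single out as the genuinely new content but never actually prove. ``Direct substitution'' only yields, up to irrelevant overall factors, the expressions
\begin{equation*}
    (\Omega^{*}L_u)_{N+1}\,=\,g^0_{jk}u^jV^k-B_ju^j,
    \qquad
    (\Omega^{*}L_u)_{N+2}\,=\,A_{jk}u^jV^k-B_jV^j,
\end{equation*}
and their vanishing is a nontrivial claim that substitution alone does not settle. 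The paper settles it by contracting the defining relation $g_{jk}V^k=A_{jl}u^l+B_j$ of \eqref{sysm} once with $u^j$ and once with $V^j$, then killing terms by total skew-symmetry: $T_{jkl}u^ju^lV^k=0$ and $A_{jl}u^ju^l=0$ give $g^0_{jk}u^jV^k=B_ju^j$, while $T_{jkl}u^lV^kV^j=0$ and $g^0_{jk}V^jV^k=0$ give $A_{jl}u^lV^j+B_jV^j=0$. This short contraction-plus-skew-symmetry step is exactly what is missing from your plan; in particular, your coordinate-free reformulation via $\iota_{P_u}\iota_{Q_u}\Omega$ is fine as bookkeeping, but its $(N+1)$-st and $(N+2)$-nd components are \emph{not} ``the compatibility condition \eqref{eq:37} repackaged through \eqref{sysm}'': Theorem~\ref{thm11} encodes only the vanishing of the first $N$ components, so the theorem does not ``reduce to'' it.

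Worse, the shortcut you advertise as the clean way is false. The last two equations cannot follow automatically from the first $N$ together with membership in the Grassmannian: by Remark~\ref{rem:discrepancy} (and visibly at the end of Example~\ref{exaN2}) the first $N$ linear equations cut $\Gr(2,N+2)$ in a union of two irreducible components, $X_\omega$ and $Y$, and the points of $Y$ satisfy the first $N$ equations and all Pl\"ucker relations while violating the last two --- this is the very reason the theorem needs $N+2$ equations rather than $N$. In Example~\ref{exaN2} the extra component is cut out by $p^{34}=0$, and excluding it required the additional information $p^{34}=(u^3)^2\neq 0$ about the specific lines $L_u$, not merely the Pl\"ucker quadric; for general $N$ no such description of $Y$ is available. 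So a correct proof must verify the two extra equations directly from \eqref{sysm}, as the paper does.
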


\begin{proof}
{ Let us consider the system of $N$ linear equations \eqref{eqs} in the Pl\"ucker coordinates that are satisfied by the lines $L_u$. These can be rewritten as
    \begin{equation}
        \sum_{k,l=1, \, k<l}^NT_{jkl}p^{kl}+\sum_{k=1}^Ng^0_{jk}p^{k\, N+1}+
        \sum_{k=1}^NA_{jk}p^{k\, N+2}+B_jp^{N+1\, N+2}=0
        \label{02341}
    \end{equation}
    for $i=1,\ldots, N$. } Then using the definition of $\Omega$ \eqref{eq:omcoeff} we obtain:
    \begin{align}\begin{split}
       & \sum_{k,l=1, \, k<l}^N\omega_{jkl}p^{kl}+\sum_{k=1}^N\omega_{jk\, N+1}p^{k\, N+1}+\\&\hphantom{ciaciaoo}+\sum_{k=1}^N\omega_{jk\, N+2}p^{k\, N+2}+\omega_{j\, N+1\, N+2}p^{N+1\, N+2}=0\end{split}
    \end{align}
    that is 
    \begin{equation}
        \sum_{k,l=1\, k<l}^{N+2}\omega_{jkl}p^{kl}=0,  \qquad  j=1,2,\ldots,N+1,
    \end{equation}
    thus proving the first $N$ equations obtained from $\Omega$ are satisfied.

    Let us now derive the last two equations.  At first, let us consider
    $g_{jk}V^k=A_{jl}u^l+B_j$, then by multiplying by $u^j$  we use the
    skew-symmetry to obtain:
\begin{equation}
 (T_{jkl}u^l\,+\,g_{jk}^0)V^ku^j\,=\,A_{jl}u^lu^j+B_ju^j,
\end{equation}
 so that after  using \eqref{plu}, one obtains
\begin{equation}
-\frac{1}{2}g^0_{jk}(u^jV^k-u^kV^j)\,=\, B_ju^j\quad\Longrightarrow\quad
\frac{1}{2}g_{jk}^0p^{jk}\,-\,B_jp^{j\, N+2}\,=\,0.
\end{equation}
Using  \eqref{eq:omcoeff}, the previous expression is
\begin{equation}
    \sum_{k,l=1\, k<l}^N\omega_{N+1\, kl}p^{kl}+\sum_{k=1}^N\omega_{N+1\, j\, N+2}p^{j\, N+2}=0.
\end{equation}
This is in fact the $N+1$-st equation:
\begin{equation}
    \sum_{k,l=1\, k<l}^{N+2}\omega_{N+1\, kl}p^{kl}=0.
\end{equation}

The last relation is obtained similarly, by multiplying  $g_{jk}V^k=A_{jl}u^l+B_j$ by $V^j$
and by using the skew-symmetry:
\begin{equation}
 (T_{jkl}u^l\,+\,g_{jk}^0)V^kV^j\,=\,A_{jl}u^lV^j+B_jV^j
\end{equation}
which implies
$
A_{jl}u^lV^j+B_jV^j\,=\,0$, and finally 
\begin{equation}\frac{1}{2}A_{jl}p^{jl}-B_kp^{kN+1}\,=\,0.\end{equation}
This is 
\begin{equation}
    \sum_{k,l=1\, k<l}\omega_{N+1\, kl}p^{kl}+\sum_{k=1}^N\omega_{N+2\, k\, N+2}p^{k\, N+1}=0
    \end{equation} 
    or equivalently, 
    \begin{equation}
        \sum_{k,l=1\, k<l}^{N+2}\omega_{N+2\, kl}p^{kl}=0.
    \end{equation}
    This is the $N+2$-st equation. Note that by construction the coefficients are totally skew-symmetric and the theorem is proved.
\end{proof}

{ \begin{remark}\label{rem1}

One can observe that if $\tilde{T}$ is non-degenerate (i.e., it defines a
non-degenerate second-order Hamiltonian operator as in \cite{VerVit2}),
then the converse of this statement is also true. Indeed, if
\begin{equation}
    \omega=\tilde{T} + A\wedge du^{N+2} + B\wedge du^{N+1}\wedge
du^{N+2},
\end{equation} and $\tilde{T}$ is non-degenerate, then we define $g_{ij}$
as in \cite{VerVit2}. We invert it and find the following lines:
\begin{equation}
    y^i=u^iy^{N+1}+g^{is}\left(\omega_{s\, l\, N+2}u^l+\omega_{s\, N+1\,
N+2}\right)y^{N+2}.
\end{equation} These lines uniquely define the annihilation set of
$\omega$ (and the associated system).
\end{remark}}

\vspace{3mm}


{In summary, Theorems \ref{thm11} and \ref{thm:cong} state that the
natural geometric structures of hydrodynamic-type systems admitting
Hamiltonian formalism with second-order homogeneous operators are
alternating three-forms in the projective space $\mathbb{P}^{N+1}$ along
with their annihilation sets of lines.}

\section{Classification of {systems of conservation laws systems with second-order Hamiltonian structure}}
\label{sec5}

{
In \cite{agafer1,agafer3}, the authors show that the group of complete reciprocal transformations \eqref{rec} together with the affine transformations of the field variables ${u}$ is isomorphic to the projective group in $\mathbb{P}^{N+1}$. So that, quoting \cite[pag. 156]{agafer3}:
\begin{quote}
\emph{``[\ldots] the
classification of systems of conservation laws up to transformations (8) [\eqref{rec}] and affine changes of ${u}$ is
equivalent to the classification of the corresponding congruences up to projective equivalence.''} 
\end{quote}
 In addition, it is a remarkable fact that the Hamiltonian property is also preserved by these transformations, as proved in Theorem ~\ref{thm:prinv}.  For our purposes, we further point out that Theorem \ref{thm:cong} and Remark \ref{rem1} guarantee that {classifying up to the action of $\PGL(N+2,\K)$ congruences of lines as }
 \begin{equation}
     y^i=u^iy^{N+1}+g^{is}\left(A_{sl}u^l+B_s\right)y^{N+2}, \qquad i=1,2,\dots, N,
 \end{equation}
{ where $g_{ij}=T_{ijk}u^k+g^{0}_{ij}$ as defined before,
 is equivalent to classify three-forms in $\Omega\in  \Lambda^3\mathbb{K}^{N+2}$ in the form}
\begin{equation}\label{dec1}
\begin{array}{rcl}
    \Omega&=&\tilde{T}\,+\,A\wedge \ud u^{N+2}\,+\,B\wedge \ud u^{N+1}\wedge \ud u^{N+2}\\
   &=&\tilde{T}\,+\,\tilde{A}\wedge \ud u^{N+2}~,
\end{array}
\end{equation}
where
$\tilde{T}\in\Lambda^3\mathbb{K}^{N+1}$, $A\in \Lambda^2\mathbb{K}^{N}$,
$B\in\Lambda^1\mathbb{K}^N\cong \K^N$ and $\tilde{A}\in \Lambda^2\mathbb{K}^{N+1}$, satisfying  the two following consistency conditions:
\begin{enumerate}
    \item the alternating three-form $\tilde{T}\in\Lambda^3\mathbb{K}^{N+1}$ must be 
        non-degenerate, i.e. it must define a non-degenerate second-order 
        homogeneous Hamiltonian operator;
    \item the alternating two-form $\tilde{A}\in\Lambda^2\mathbb{K}^{N+1}$ must be
        non-null, i.e. it must define a non-trivial system.
\end{enumerate}
Notice that point {(i)} implies that we can use the classification of second-order operators
obtained in~\cite{VerVit2}. That is, we can start with a fixed alternating
three-form in $\tilde{T}\in \Lambda^3\mathbb{K}^{N+1}$ in a standard form. So,
we can classify $\tilde{A}$ up to the action of subgroup of $\SL(N+1,\K)$
stabilising $\tilde{T}$, which we will denote by $\stab_{\SL(N+1,\K)}
(\tilde{T})$. 
}


{ Following the previous considerations, in this section we give a classification of systems of conservation laws possessing a second-order Hamiltonian formalism in
$N=2,4$ components and briefly discuss  the main conceptual and technical difficulties in the case $N=6$.} With $N=2,4$,  our results are valid for $\K$ being either
the real or the complex { field.} 

\subsection{Case $N=2$}

Let us start with the simplest case, i.e. $N=2$. The only operator of this 
type is  
\begin{equation}
    \label{pp2}
    \mathcal{P}_2
    =
    \begin{pmatrix}
        0&1\\-1&0
    \end{pmatrix}\partial_x^2,
\end{equation}
see~\cite{VerVit2}. The associated three-form is $\tilde{T}_2=\ud u^1\wedge
\ud u^2\wedge \ud u^3$. Note that $\stab_{\SL(3,\K)}(\tilde{T}_2)=\SL(3,\K)$.
Indeed, $\tilde{T}_2$ is a volume form of $\K^3$. In this case the
form $\Omega$, as in~\eqref{dec1}, is explicitly given by:
\begin{equation}\label{decN2}
    \Omega_2=\tilde{T}_2+A_{12}\ud u^1\wedge \ud u^2 \wedge \ud u^{4}
    +B_1 \ud u^1 \wedge \ud u^{3}\wedge \ud u^{4}+
    B_2 \ud u^2 \wedge \ud u^{3}\wedge \ud u^{4}.
\end{equation}
Note that $A_{12}\neq0$, otherwise we obtain only the null-system.
Therefore, using the action of $\SL(3,\K)$, we can rescale $A_{12}$ to $1$,
and map the vector $(B_1,B_2)^T$ to the vector $(1,0)^T$.
This exhausts our possibilities.

So, the { Hamiltonian system is}:
\begin{equation}
    \begin{cases}u^1_t=u^1_x,
    \\
    u^2_t=u^2_x,
    \end{cases}
\end{equation}
{i.e. it is linear.}

\subsection{Case $N=4$}

Let us firstly observe that there are two orbits of $\SL(5,\K)$ on three-forms, but one of them gives
$g_{ij}$ with determinant zero. We have to consider only the alternating
three-form in $\K^5$:
\begin{equation}
    \tilde{T}_4=\ud u^1\wedge \ud u^2\wedge \ud u^5+\ud u^3\wedge \ud u^4\wedge \ud u^5,
\end{equation}
which determines the constant operator, see~\cite{VerVit2}:
\begin{equation}
    \mathcal{P}_4=\begin{pmatrix}
        0&1&0&0\\
        -1&0&0&0\\
        0&0&0&1\\
        0&0&-1&0
    \end{pmatrix}\partial_x^2~.
    \label{eq:Pij4}
\end{equation}
Note that $\tilde{T}=\eta\wedge \ud u^5$, where 
\begin{equation}
    \eta=\ud u^1\wedge \ud u^2+\ud u^3\wedge \ud u^4
    \label{eq:etaform}
\end{equation}
is a symplectic form.  Therefore, as a set:
\begin{equation}
    \stab_{\SL(5,\K)}(\tilde{T}_{4})={\Sp(4,\K)}\rtimes \K^4.
    \label{eq:stabSL5}
\end{equation}
A matrix representation of this group is given
by the matrices $M\in\SL(5,\K)$ of the following form:
\begin{equation}
    M = \begin{pmatrix}
    C&0\\
    x^T&1
    \end{pmatrix},
    \quad C\in{\Sp(4,\K)},\,x\in\K^4,
\end{equation}
where we denoted by $0$ in the null vector in $\K^4$.

We first consider the action of {$\Sp(4,\K)$} on  $A\in\Lambda^2\K^4$.
The symplectic form $\eta$ on $\K^4$ introduces the following
splitting which is preserved by the action of {$\Sp(4,\K)$} on $\Lambda^2 \K^4$:
\begin{equation}
    \Lambda^2 \K^4 =  \K\eta\, \oplus\, \Theta_\eta,
    \label{eq:l2c4split}
\end{equation}
where:
\begin{equation}
    \Theta_\eta
    =\left\{\theta\in\Lambda^2\K^4 \,\middle|\, \eta\wedge\theta = 0 \right\}.
\end{equation}
To be explicit, an element $\theta\in\Theta_\eta$  can be uniquely written as:
\begin{equation}
  {  \begin{aligned}
    \theta &= \theta_0 \left(\ud u^1\wedge \ud u^2 -\ud u^3\wedge \ud u^4\right)
    +\theta_{1,3}\ud u^1\wedge \ud u^3
    \\
    &+\theta_{1,4}\ud u^1\wedge \ud u^4
    +\theta_{2,3}\ud u^2\wedge \ud u^3
    +\theta_{2,4}\ud u^2\wedge \ud u^4,
    \end{aligned}}
    \label{eq:thetasplit}
\end{equation}
and this gives a parametrization of the second factor in~\eqref{eq:l2c4split}.
Then, any element in the orbit of $\theta_\eta\eta+\theta$ is of the form
$\theta_\eta\eta+\theta'$ for some $\theta'\in \Theta_\eta$.

Moreover, there is a quadratic form $Q$ on the 5-dimensional subspace
$\Theta_\eta$, which is invariant under the action of {$\Sp(4,\K)$}, defined by
\begin{equation}
    \theta\wedge\theta\,=\,Q(\theta)\ud u^1\wedge \ud u^2\wedge \ud u^3\wedge \ud u^4,
    \label{eq:theta2}
\end{equation}
so, with $\theta$ as in equation~\eqref{eq:thetasplit} we have:
\begin{equation}
  { Q(\theta)\,=\,-2\theta_0^2 - 2\theta_{1,3}\theta_{2,4} + 2\theta_{1,4}\theta_{2,3}~.}
    \label{eq:Qform}
\end{equation}

\begin{remark}
    We remark that if we consider $\theta$ as an alternating $4\times 4$
    matrix, then $Q(\theta)$ is twice the Pfaffian of that matrix.
    \label{rem:pfaffian}
\end{remark}

It is well-known that the image of {$\Sp(4,\K)$} in the orthogonal group of $Q$ is
the connected component of the identity element and {$\Sp(4,\K)$} is the Spin double
cover of this component. Since the orthogonal group acts transitively on the
two-forms $\theta$ with a given value of $Q(\theta)$, the {$\Sp(4,\K)$}-orbit of
$\theta_\eta\eta+\theta$ consists of all the two-forms $\theta_\eta\eta+\theta'$
with $Q(\theta)=Q(\theta')$. Thus we may assume that:
\begin{equation}
 {   A\,=\,\theta_\eta \eta\,+\,\theta_{1,3}\ud u^1\wedge \ud u^3\,+\,\ud u^2\wedge \ud u^4}
    \label{eq:Acan}
\end{equation}
with {$Q(A)\,=\,-2\theta_{1,3}$}. That is, we fixed the two-form $A$, which now
depends on two arbitrary coefficients in $\K$, $\theta_{\eta}$ and {$\theta_{1,3}$}.

Now, we should use the remaining freedom to fix the shape of the vector $B$.
However, we note that this is superfluous, since being the cometric $g^{ij}$ in
equation~\eqref{eq:Pij4} constant following the definition of the vectors
$V^{i}$ in equation~\eqref{sysm} it will disappear upon differentiation. So,
using the definition of quasilinear system of conservation
laws~\eqref{11} we obtained that the only  system is
\begin{equation}
    \left\{  
        \begin{aligned}
            u_{t}^{1} &= \theta_{\eta}u^{1}_{x}-u^{4}_{x}, 
            \\
            u_{t}^{2} &=
            \theta_{\eta}u^{2}_{x}+\theta_{1, 3}u^{3}_{x},
            \\
            u_{t}^{3} &=u^{2}+ \theta_{\eta}u^{3},
            \\
            u_{t}^{4} &=-\theta_{1, 3}u^{1}_{x}+ \theta_{\eta}u^{4}_{x}.
        \end{aligned}
    \right.
    \label{eq:dim4sys}
\end{equation}
Note that the system is linear as in the $N=2$ case, but no longer decoupled.

{\subsection{Towards a classification for $N=6$}

In this subsection, we briefly discuss the main conceptual and technical
difficulties one encounters when addressing the problem of classifying quasilinear
systems of conservation laws admitting a homogeneous
second-order Hamiltonian formulation in 6 components. 

Following the
procedure discussed at the beginning of this section, we need first to fix
a second-order homogeneous Hamiltonian operator, to which is uniquely associated a three-form in dimension $7$.  The classification of three-forms in
dimension $7$ was undertaken by Schouten~\cite{Schouten}, and is for
instance reported in the book by Gurievich~\cite{gurievi}. Up to the action
of $\SL(7,\K)$, there are nine orbits, one which is open. To
classify the systems, we need to start from an orbit whose representative
is non-degenerate.  Following~\cite{VerVit2}, there are five such orbits,
including the open one. Therefore, to tackle this problem, we have to consider
a normal form for the three-form of shape:
\begin{equation}
\Omega = \omega + \omega' \wedge \ud u^8, \qquad \omega' := A_{ij} \ud
u^i \wedge \ud u^j \wedge + B_i \ud u^i \wedge \ud u^7,
\label{eq:form6}
\end{equation}
where $\omega$ is a \emph{properly chosen} representative from Schouten's list.
Here, we will just discuss briefly the first steps that are needed to solve
this problem in the case of the open orbit. We will show how many tools
from representation theory~\cite{Fulton_H} are needed and give a flavor of
how elaborate this process can be.

Let us consider the open orbit of the vector space $\Lambda^3\K^7$ with
respect to the action of $\SL(7,\K)$, and call its representative $\omega$
as in~\eqref{eq:form6}. The subgroup of $\SL(7,\K)$ that fixes $\omega$ is
a ($\K$-form of the) Lie group of type $G_2(\K)$ of dimension $14$, see
\cite[Prop.\ 22.12]{Fulton_H}.  Since $\omega$ is fixed by $G_2(\K)$, and
so is $\ud u^8$, we only need to understand the action of $G_2(\K)$ on
$\Lambda^2\K^7$.  This $G_2(\K)$-representation decomposes into irreducible 
representations of $G_2(\K)$ and of its Lie algebra $\mathfrak{g}_2$ as:
\begin{equation}
\Lambda^2\K^7\,=\,\mathfrak{g}_2\,\oplus\,\K^7~,    
\end{equation}
where as $G_2(\K)$-representations the two factors are the { adjoint and the 
standard }representation respectively, see \cite[\S 22.3]{Fulton_H}.

 At this point one has to understand the action of $G_2(\K)$ on the two factors in order to find invariants and normal forms.
These
computations necessitate quite some effort, so that  we will give a
detailed description of the whole procedure in an upcoming paper.
}

\section{Conclusions}
\label{concl}

{In this paper, we have established novel relationships among second-order
homogeneous Hamiltonian operators, their associated quasilinear systems
of conservation laws, and alternating three-forms along with their
intrinsic geometric structures. Specifically, we have proven the
following results:}
\begin{itemize}
    \item a set-theoretical bijection between $\mathcal{Y}_{N}$, the space of pairs
        $(\mathcal{P},V)$ with $\mathcal{P}$ second-order Hamiltonian operator
        and $V$ the associated system of conservation laws in $N$ components,
        and $\Lambda^{3}\K^{N+2}$, the space of alternating three-form in
        dimension $N+2$;
    \item the projective{-reciprocal} invariance of quasilinear system of conservation
        laws $V$ admitting second-order Hamiltonian structure, justifying the
        above bijection, and leading to a deeper geometric interpretation
        of the pairs $(\mathcal{P},V)$;
         \item a novel interpretation of the line congruence associated to a
        Hamiltonian quasilinear system $V$ as the annihilation set of lines
        of the alternating three-form corresponding to its pair;
    \item a classification of the {systems of conservation laws admitting a second-order Hamiltonian structure } for $N=2$ and
        $N=4$ through the action of the subgroup of $\SL(N+1,\K)$ stabilizing
        $\Lambda^{3}\K^{N+1}\subset \Lambda^{3}\K^{N+2}$ corresponding to the
        operator $\mathcal{P}$;
        \item {we showed that for $N=6$ such classification program requires 
            the introduction of tools from representation theory of Lie groups and algebras,
            by sketching the steps needed in the case of the open orbit.}
\end{itemize}

{Several} interesting open questions and possible {extensions} remain to be
investigated{, for instance:}
\begin{itemize}
    \item characterize the (classes of) {Hamiltonian systems for}
        $N>4$;
    \item extend the correspondence we found to the bi-Hamiltonian case, see
        for instance~\cite{Ferguson2008}, and the recent results
        of~\cite{LorVit1,LorVit2};
    \item extend the correspondence of the operator-system pairs to higher
        orders.
\end{itemize}


Finally, regarding the last suggested open problem, we observed that a similar
construction of congruences of lines has been presented
in~\cite{FPV17:_system_cl} for third-order Hamiltonian operators. However, a
direct interpretation as annihilation set of lines was not possible, because
the obtained linear equations are not governed by a totally skew-symmetric
tensor. The results of the present paper suggest that there might be a canonical
$(0,2)$-tensor which maps such linear equations into the annihilation set of
lines of a properly chosen alternating three-form. { We plan to investigate the general picture behind such operators in a future paper.}

\subsection*{Acknowledgements}


GG's and PV's research was partially supported by GNFM of the Istituto
Nazionale di Alta Matematica (INdAM), the research project Mathematical Methods
in Non-Linear Physics (MMNLP) by the Commissione Scientifica Nazionale – Gruppo
4 – Fisica Teorica of the Istituto Nazionale di Fisica Nucleare (INFN). 

PV's acknowledges the financial support of INdAM by the research grant
\emph{Borsa per l’estero dell’Istituto Nazionale di Alta Matematica}, { and is partially supported by the project “An artificial intelligence approach for risk assessment and prevention of low back pain: towards precision spine care”, PNRR-MAD-2022-12376692, CUP: J43C22001510001 funded by the European Union - Next Generation EU - NRRP M6C2 - Investment 2.1 Enhancement and strengthening of biomedical research in the NHS.}

\end{document}